\documentclass[conference,letterpaper]{IEEEtran}

\usepackage{cite}

\usepackage{graphicx} 

\usepackage{amsmath,amsfonts,amssymb}
\usepackage{algorithmic}     

\usepackage{array}      

\usepackage[caption=false,font=footnotesize]{subfig} 

\usepackage{stfloats}  

\usepackage{url} 
\usepackage{balance}   
\newtheorem{theorem}{Theorem}
\newtheorem{lemma}{Lemma}

\newtheorem{definition}{Definition}
\newtheorem{remark}{Remark} 
 
\usepackage{dsfont}

\usepackage{makecell}
\usepackage{threeparttable} 
\usepackage{multirow} 
\usepackage[table]{xcolor}  
\usepackage{hhline}     

\usepackage{verbatim} 
\allowdisplaybreaks

\definecolor{mycolor}{RGB}{111, 78, 55} 
\definecolor{mycolor2}{RGB}{0,0,139}

\begin{document}
	\title{On Demand-Private Coded Caching With Multiple Demands}   
\author{
	\IEEEauthorblockN{
		Qinyi Lu\IEEEauthorrefmark{*},
		Nan Liu\IEEEauthorrefmark{+} and
		Wei Kang\IEEEauthorrefmark{*}}
	\IEEEauthorblockA{\IEEEauthorrefmark{*}School of Information Science and Engineering, Southeast University, Nanjing, China 210096}
	\IEEEauthorblockA{\IEEEauthorrefmark{+}National Mobile Communications Research Laboratory, Southeast University, Nanjing, China 210096}
	\IEEEauthorblockA{qylu@seu.edu.cn,nanliu@seu.edu.cn,wkang@seu.edu.cn}}

\DeclareRobustCommand*{\IEEEauthorrefmark}[1]{
	\raisebox{0pt}[0pt][0pt]{\textsuperscript{\footnotesize\ensuremath{#1}}}} 

	\maketitle
	\begin{abstract}
	We consider a coded caching problem with multiple demands under a privacy constraint. 
	In this problem, a server with access to \(N\)  files serves \(K\) users over a shared link, and each user requests \(L\) distinct files.     
	The privacy constraint requires that each user obtain no information about the demands of the other users.  
	We propose a new achievable scheme for arbitrary numbers of files and users.  
 	The scheme is obtained via a transformation from a non-private coded caching scheme under uncoded placement for \(N\) files and \(K \cdot  \min\{N,KL\}\)  users, where each user requests one file and the demands are restricted to a subset of all possible demands.      
	We then derive a converse bound, and the proposed scheme is shown to be order optimal within a factor of 6 of this bound.    
	\end{abstract}     
	
\section{Introduction}     
Coded caching, formulated by Maddah-Ali and Niesen in~\cite{MaddahAli2014}, has received significant attention.         
In this model, a server accessing a library of $N$ files serves $K$ users over an error-free shared link,  with each user equipped with a cache of normalized size $M$. 
The system consists of two phases: a placement phase, where each user fills its cache as a function of the $N$ files, and a delivery phase. 
In the delivery phase, each user requests one file, and the server broadcasts a message to ensure that each user can  decode its requested file based on its own cache content and the broadcast message.  
A fundamental problem is to characterize the tradeoff between the cache size and the worst-case rate of the broadcast message, i.e., the memory-rate tradeoff.
This tradeoff is known exactly under uncoded placement and is achieved by the scheme proposed by Yu \emph{et al.}, referred to as the YMA scheme \cite{Qian2018}.  
For general placement, improved converse bounds in \cite{Qian2019}, together with the YMA scheme,  establish order optimality within a factor of \(2.00884\).

Schemes for the classical coded caching problem may leak information about users' demands to other users, which raises a fundamental privacy issue.         
Demand privacy in coded caching was first considered in \cite{Engelmann2017}.          
Since then, the problem has been extensively studied from an information-theoretic perspective in \cite{Chinmay2022,Aravind2020,Yan2021,Namboodiri2021,Wan2021,Gholami2023,Lu2026}.  
For the case where each user requests one file, three achievable schemes were proposed in \cite{Chinmay2022}, achieving order optimality within a constant factor.   
Moreover, the optimal memory-rate tradeoff for the special case \(K=2\) was completely characterized in \cite{Chinmay2022}. 
The optimal memory-rate tradeoff for this problem was further characterized in some regimes in \cite{Lu2026}. 
Compared with the privacy constraint considered in \cite{Chinmay2022}, Aravind \emph{et al.} studied a weaker privacy constraint and focused on reducing the subpacketization level~\cite{Aravind2020}.     
In \cite{Yan2021}, demand privacy against colluding users was investigated for both Single-File Retrieval (SFR), in which each user requests one file, and Linear Function Retrieval (LFR). For this problem, the authors proposed a scheme based on privacy key and derived a converse bound for the SFR scenario, thereby establishing order optimality.  
In \cite{Namboodiri2021}, the optimal memory-rate tradeoff for the SFR scenario was characterized for small cache sizes under the same privacy constraint.    
In~\cite{Gholami2023}, Gholami \emph{et al.} considered the coded caching problem with simultaneously private demands and caches, and proposed a new construction that generates demand-private coded caching schemes by leveraging private information retrieval schemes~\cite{Sun2017PIR}.       

In practical scenarios such as a \emph{FemtoCaching} network\cite{Golrezaei2013}, a cache-enabled small-cell base station may serve multiple users, so that multiple demands are aggregated at the same cache, thereby motivating the coded caching problem with multiple demands  \cite{Ji2015}.    
For the non-private scenario,   such a problem has been studied in \cite{Ji2014, Ji2015, Sengupta2017, Wei2017}, with heterogeneous numbers of demands also considered in \cite{Huang2020}.        
In addition, as pointed out in \cite{Parrinello2020}, the coded caching problem with multiple demands is closely related to the shared-cache scenario, which has also been studied in works, such as~\cite{Xu2019,Ibrahim2019}.      

The study of demand-private coded caching with multiple demands is motivated by the practical relevance of multiple demands and the need for demand privacy. 
In \cite{Wan2021}, Wan \emph{et al.} investigated this problem, where each of the \(K\) users requests \(L\)  distinct  files.        
They proposed two achievable schemes based on virtual users and MDS codes, respectively, and further established order optimality within a factor of \(22\) for the case \(L>1\) when \(N \leq LK\), or when \(N > LK\) and \(M \geq N/K\).     
However, order optimality for this problem has not been established in all parameter regimes. 

In this paper, we further investigate demand-private coded caching with multiple demands and propose a new scheme for arbitrary numbers of files and users.       
The proposed scheme is based on an $(N, K\bar{N}, 1)$ restricted-demand-subset non-private uncoded  coded caching problem, which we refer to as the $(N, K\bar{N}, 1)$ RD-NP-UCC problem, where $\bar{N} = \min\{N, KL\}$.    
In this problem, there are $N$ files and $K\bar{N}$ users, each requesting one file, under uncoded placement, while the demand vectors are restricted to a subset of all possible demands.      
We establish a transformation from schemes for the RD-NP-UCC problem to demand-private coded caching schemes with multiple demands.         
In this transformation, each user requesting \(L\) files randomly selects \(L\) virtual users from a corresponding set of size \(\bar N\) and caches the union of their cache contents.  
The transformation also uses suitable randomizations of both the file labels and the demand vectors in the RD-NP-UCC problem to preserve privacy.       
Applying this transformation to the YMA scheme~\cite{Qian2018} yields the proposed scheme. 
Compared with the scheme based on virtual users in~\cite{Wan2021}, which requires \(K\binom{N}{L}\) virtual users, the proposed scheme uses only \(K \cdot \min\{N,KL\}\) virtual users.

We also generalize the converse bound in \cite[Theorem~2]{Qian2019} to coded caching with multiple demands,  and the resulting  bound applies both with and without demand privacy.   
Together with the proposed scheme, this converse bound yields order optimality within a factor of \(6\).     
This improves upon the result in~\cite{Wan2021}  by covering the parameter regimes left open in that work and by reducing the best previously known factor for \(L>1\) from \(22\) to \(6\).

\subsubsection*{Notations}
We define $[a:b] = \{a, a+1, \dots, b\}$ and $[a]= [0:a-1]$. 
For any finite set $\mathcal A$, $X_{\mathcal A}  =  (X_i)_{i\in\mathcal A}$ denotes the ordered tuple indexed by the elements of $\mathcal A$ in lexicographical order; 
$\mathrm{Unif}(\mathcal A)$ denotes the discrete uniform distribution over $\mathcal A$; 
 $\mathrm{Perm}(\mathcal A)$ denotes the set of all permutations of $\mathcal A$,   
and $\mathbf 1\{x\in\mathcal A\}$ denotes the indicator function that equals $1$ if $x\in\mathcal A$ and $0$ otherwise.    
Let $\mathbf e_i^{a}\in\mathbb F_q^{a}$ denote  the $i$-th standard basis vector of $\mathbb F_q^{a}$,  i.e.,   the vector whose $i$-th  component is $1$ and all others are $0$. For a vector $\mathbf v=(v_0,v_1,  \dots,v_{b-1})^{\mathsf T}  \in[a]^b$,  we define  
$ 
	\mathbf I_a(\mathbf v,:)  = 	\big[ \mathbf e_{v_0}^{a}, \mathbf e_{v_1}^{a},\dots,   \mathbf e_{v_{b-1}}^{a} 	\big]^{\mathsf T} 
$, which follows the MATLAB-style row-selection syntax.

	\section{System Model}   \label{sec_Model} 

	Let $N$, $K$, and $L$ be positive integers.
	We consider an $(N,K,L)$ demand-private coded caching problem defined as follows.  
	The system contains a server with $N$ files and $K$ cache-aided users,  where the server and the users are connected through an error-free shared link.   
	Let $q$ be a sufficiently large prime number. 
	The $N$ files are denoted by $W_0, W_1, \ldots, W_{N-1}$, where each file $W_n \in \mathbb{F}_q^F$ is a column vector of $F$ symbols\footnote{Throughout the paper, all operations on file symbols are performed over $\mathbb{F}_q$.}.  These files are independently and uniformly distributed over $\mathbb{F}_q^F$. 
	We denote the entire file library by $\mathbf{W} = [W_0, W_1, \ldots, W_{N-1}]^{\mathsf T}$. 
	As in prior works on coded caching (e.g., \cite{MaddahAli2014}), the file size $F$ is assumed to be sufficiently large to allow arbitrary subpacketization.  
	The system operates in two phases: the placement phase and the delivery phase.

	\subsubsection*{Placement Phase}  
	  Each user $k\in[K]$ is equipped with a cache of size $MF$ symbols over $\mathbb{F}_q$, 
	  where $M$ is the cache size normalized by the file size $F$.   
	  The server fills the users’ caches prior to knowing their demands.        
	  To ensure demand privacy, the server provides each user $k$ with a random variable $S_k$ over the alphabet $\mathcal{S}_k$. 
	  Let $S =  (S_0, S_1, \dots, S_{K-1})$ be the collection of these  random variables.      	    	
	  The server also holds a random variable $P$ over the alphabet $\mathcal{P}$,      
	  which is known only to the server.  
  The cache content of user \(k\) is given by
\begin{align*}
	Z_k=\bigl(\phi_k(S_k,P,\mathbf{W}),\,S_k\bigr), \quad  k\in[K],  
\end{align*} 
	 where the \emph{caching function} \(\phi_k\) is defined as 
	 \begin{align*}
	 	\phi_k:\mathcal{S}_k\times\mathcal{P}\times\mathbb{F}_q^{NF}
	 	\to \mathbb{F}_q^{MF}.
	 \end{align*} 
Throughout this paper, logarithms are in base $q$. For sufficiently large $F$,  the size of the alphabet of $S_k$  is negligible relative to $MF$ and is ignored. The cache size constraint is      
	 \begin{align*}  
	 	\frac{H \left(Z_k\right) }{ F } =  \frac{H \left(	\phi_k (S_k, P, \mathbf{W}) \right) }{ F }   \le M, \quad  k \in [K]. 
	 \end{align*}

\subsubsection*{Delivery Phase}    
	Each user requests $L$ distinct files, where $L \le N$. For each  $k \in [K]$, let $\mathbf{d}_k = (d_{k,0}, d_{k,1}, \ldots, d_{k,L-1})^{\mathsf T}$ denote the demand vector of user $k$.   
	Collecting all users' demand vectors, the demand matrix is $\mathbf{D} = [\mathbf{d}_0, \mathbf{d}_1, \ldots, \mathbf{d}_{K-1}]^{\mathsf T}$. 
	Furthermore, let $ \mathbf{D}_{\setminus k} = [ \mathbf{d}_0,   \ldots,  \mathbf{d}_{k-1}, \mathbf{d}_{k+1},   \ldots,  \mathbf{d}_{K-1} ]^{\mathsf{T}} $  denote the demand vectors of all users except user $k$.  The set of all possible demand matrices is given by   
\begin{align*} 
	\mathcal{D} =  \{ \mathbf{D} :  d_{k,l} \in [N], d_{k,l} \neq d_{k,l'}, \forall k \in [K], l, l' \in [L], l \neq l' \}. 
\end{align*}	 

We assume that the demands, the file library, the random variable held by the server, and the random variables provided to the users are mutually independent, i.e.,   
\begin{align*} 
	  	& H \left( \mathbf{W}, \mathbf{D},  P,  S\right) \nonumber \\
	  	&   =  NF  + \sum_{k\in[K]} H ( \mathbf{d}_k) +  H \left(P \right) +  \sum_{k\in[K]} H \left( S_k \right).     
\end{align*}   
Upon receiving the demand matrix $\mathbf{D}\in\mathcal{D}$, the server generates a broadcast message 
\begin{align*}
	X_{\mathbf D}=\bigl(\psi(\mathbf D,P,S,\mathbf W),\,\mu(\mathbf D,P,S)\bigr),
\end{align*}
according to the \emph{encoding functions} 
\begin{align*}
	& \psi: \mathcal{D} \times \mathcal{P} \times \mathcal{S}_0 \times \cdots \times \mathcal{S}_{K-1} \times \mathbb{F}_q^{NF}
	\to \mathbb{F}_q^{RF}, \\
	& \mu: \mathcal{D} \times \mathcal{P} \times \mathcal{S}_0 \times \cdots \times \mathcal{S}_{K-1}
	\to \mathcal{J},
\end{align*}
where $R$ is the rate of the shared link, and the size of $\mathcal{J}$ is negligible relative to the file size $F$. The rate constraint is  
\begin{align*}
	\frac{H(X_{\mathbf D})}{F}
	=
	\frac{H(\psi(\mathbf D,P,S,\mathbf W))}{F}
	\le R,\quad \forall\,\mathbf D\in\mathcal D.
\end{align*} 	     
For $k \in [K]$, user $k$ should be able to decode its requested files, while learning no information about the demands of the other users beyond its own demand $\mathbf{d}_k$, i.e.,     
\begin{align} 
   \text{[Correctness]}  \quad   & 	H\bigl(\{W_{d_{k,l}}:l\in[L]\} \big| X_{\mathbf D}, \mathbf d_k, Z_k\bigr) = 0, \nonumber \\ 
& \qquad \qquad \qquad   \forall k\in[K], \label{correctness}\\
  \text{[Privacy]}  \quad    &   I\bigl(\mathbf D_{\setminus k}; X_{\mathbf D}, Z_k, \mathbf d_k\bigr) = 0, \quad \forall k\in[K]. \label{privacy} 
	\end{align}        
The pair $(M,R)$ is said to be achievable if there exist caching and encoding functions  satisfying \eqref{correctness} and \eqref{privacy} with cache size $M$ and rate $R$ for sufficiently large file size $F$. We define the optimal memory-rate tradeoff as  
\begin{align*}
	R_{N,K,L}^{*{\mathrm{p}}}(M) =  \inf\left\{R:(M,R)\text{ is achievable}\right\},   
\end{align*}        
where the infimum is taken over all achievable rates $R$.  
 	 
\section{Main Results} 
This section presents the main results of the paper. We begin with the following theorem, which provides a new achievability result for arbitrary numbers of files and users.   
\begin{theorem} \label{theorem_achA} 
	For the $(N,K,L)$ demand-private coded caching problem, the lower convex envelope of the following pairs   
	\begin{align*}
		(M,R) =  \left( \frac{\binom{K \bar{N}}{r} - \binom{K \bar{N} - L}{r}}{\binom{K \bar{N}}{r} } N   ,  \frac{\binom{K \bar{N}}{r+1} -  \binom{(K-1) \bar{N}}{r+1}}{\binom{K \bar{N}}{r}  } \right),  
	\end{align*}  
	where $r \in [0:K\bar{N}] $ and  $\bar{N} =  \min \{N, KL\}$  is achievable.  	   
\end{theorem}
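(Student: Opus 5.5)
We prove the theorem by building a demand-private scheme out of the YMA scheme for a non-private system with \(K\bar N\) virtual users, following the transformation outlined in the introduction. Arrange the \(K\bar N\) virtual users in \(K\) groups \(\mathcal U_k=\{(k,j): j\in[\bar N]\}\), one group per real user \(k\). In the placement phase we run the uncoded MAN/YMA placement with parameter \(r\): each file is split into \(\binom{K\bar N}{r}\) subfiles \(W_{n,\mathcal T}\), indexed by \(r\)-subsets \(\mathcal T\) of virtual users, and virtual user \(u\) stores every \(W_{n,\mathcal T}\) with \(u\in\mathcal T\). Real user \(k\) draws a uniformly random injective map \(\sigma_k:[L]\to[\bar N]\) and stores the union of the contents of the \(L\) virtual users \((k,\sigma_k(l))\). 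A subfile \(W_{n,\mathcal T}\) is missing from this union exactly when \(\mathcal T\) avoids all \(L\) selected virtual users, so the stored fraction is \(\bigl(\binom{K\bar N}{r}-\binom{K\bar N-L}{r}\bigr)/\binom{K\bar N}{r}\) of each file. This gives the claimed \(M\). The key \(S_k\) also contains \(\sigma_k\), together with a random relabeling of which file index the virtual users of group \(k\) will be told to request.

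For delivery, fix \(\mathbf D\). Within group \(k\) we assign the \(\bar N\) virtual users a demand vector in which virtual user \((k,\sigma_k(l))\) requests \(d_{k,l}\). The remaining \(\bar N-L\) virtual users request files chosen so that group \(k\) covers a set of \(\bar N\) files. When \(\bar N=N\), this is all files, each exactly once, so the group's demand is a permutation of \([N]\). When \(\bar N=KL<N\), the set is a union of all users' demands padded to size \(KL\); that set is common to all groups, and each group's demand is a uniformly random bijection onto it, consistent with the real user's demands. The server then sends the YMA delivery for this restricted demand vector, plus the (negligible) description of the virtual demand vectors under a random relabeling by \(P\).

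Correctness is immediate: user \(k\) holds the cache of virtual user \((k,\sigma_k(l))\), which decodes \(d_{k,l}\).

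For the rate, the YMA rate is \(\bigl(\binom{K\bar N}{r+1}-\binom{K\bar N-\mathcal N_e}{r+1}\bigr)/\binom{K\bar N}{r}\), where \(\mathcal N_e\) is the number of distinct requested files. This is the wrong quantity for us, since \(\mathcal N_e=\bar N\) would not give \((K-1)\bar N\). Instead I would use the leader-based refinement of YMA in a different grouping. Pick one group, say group \(0\), whose \(\bar N\) virtual users request \(\bar N\) distinct files covering every requested file, and let these be the leaders. The YMA argument shows that only the multicast messages for \((r+1)\)-sets that intersect the leader set need to be sent; the others are linear combinations of those. The number of such sets is \(\binom{K\bar N}{r+1}-\binom{(K-1)\bar N}{r+1}\), which gives the claimed \(R\). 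Convex combinations then follow by memory sharing.

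The main obstacle is privacy: showing that \((X_{\mathbf D},Z_k,\mathbf d_k)\) is independent of \(\mathbf D_{\setminus k}\). The plan is to show that the joint law of the virtual demand vector seen by user \(k\), its own \(\sigma_k\), and the transmitted coded messages is the same for every \(\mathbf D_{\setminus k}\). For other groups this requires that each group's demand be a uniform bijection onto a fixed \(\bar N\)-set, uniform in the labels. The subtle case is \(\bar N=KL<N\), where the union of demands itself reveals information. There I would additionally apply a random file-label permutation known to the server only through \(P\) and encoded into the caches' indexing via \(S_k\), so that the revealed set is uniformly distributed given \(\mathbf d_k\). The real work is checking that this randomization is compatible with the fixed uncoded placement.
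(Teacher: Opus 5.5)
Your construction matches the paper's: $K$ groups of $\bar N$ virtual users, a uniform injective $\sigma_k$ (the paper's $S_k\sim\mathrm{Unif}(\mathcal S^{(\bar N,L)})$), caching the union of $L$ YMA caches, and a virtual demand whose blocks are all permutations of one $\bar N$-set $\mathcal T\supseteq\mathcal N_{\mathrm e}(\mathbf D)$, with user $k$'s files at positions $\sigma_k(l)$. Memory and correctness are fine. The rate paragraph has an arithmetic slip: $K\bar N-\bar N=(K-1)\bar N$, so the ordinary YMA rate with $N_{\mathrm e}=\bar N$ (leaders in block $0$) already equals the claimed $R$. No separate refinement is needed.

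The real gap is privacy. You leave it as a plan, and the plan is ambiguous exactly where it matters. The relabeling must be a single global $\mathbf P\sim\mathrm{Unif}(\mathrm{Perm}([N]))$ in the server-only randomness, never part of $S_k$. Your text, however, puts ``a random relabeling'' into the key $S_k$ and has the permutation ``encoded into the caches' indexing via $S_k$''. If user $k$ can undo the relabeling, it reads off $\mathcal T$, and since $\mathcal T\supseteq\mathcal N_{\mathrm e}(\mathbf D)$ this leaks. For example, with $N=5$, $K=L=2$ and $\mathbf d_0=(0,1)$, observing $\mathcal T=\{0,1,3,4\}$ rules out $\mathbf d_1=(2,3)$. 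The paper hides the relabeling by running the non-private scheme on $\mathbf W^{(\mathrm{np})}$ with $W^{(\mathrm{np})}_{P_n}=W_n$. Slot $P_n$ of user $k$'s cache holds the symbols of $W_n$ indexed by $\bigcup_l\mathcal M^{(\mathrm{np})}_{k\bar N+S_{k,l},P_n}$, and the server broadcasts $\widetilde{\mathbf P}$ with $\widetilde p_u=P_{\widetilde q_u}$. User $k$ then decodes as virtual user $k\bar N+S_{k,l}$ using slots only, never needing $\mathbf P$. The only slots it can tie to true labels are its own $P_{d_{k,l}}$. This also settles your worry about ``compatibility with placement''; for YMA the index sets do not even depend on $n$.

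Two further steps are missing.
\begin{enumerate}
\item Privacy concerns $(Z_k,X_{\mathbf D},\mathbf d_k)$, so the cache must be handled as well. The paper conditions on $\mathbf W^{(\mathrm{np})}$, which is independent of $(\mathbf P,\mathbf D,S,\mathcal T,\widetilde{\mathbf Q})$. Then caches and coded messages become functions of $(S_k,\widetilde{\mathbf P})$, and everything reduces to $I(\mathbf D_{\setminus k};\widetilde{\mathbf P}\mid S_k,\mathbf d_k)=0$. Conditioning on the true library $\mathbf W$ instead would expose $\mathbf P$ through the cache slots and make the bound useless.
\item One must then compute the law of $\widetilde{\mathbf P}$. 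For $i\neq k$, uniform $S_i$ makes $\widetilde{\mathbf q}_i$ uniform on $\mathrm{Perm}(\mathcal T)$. The block $\widetilde{\mathbf q}_k$ is uniform on $\mathcal Q_k$, with $|\mathcal Q_k|=(\bar N-L)!$. A fixed block $\mathbf p_k$ is consistent with exactly $(\bar N-L)!\,(N-\bar N)!$ permutations $\mathbf P$. Together these give
\[
\Pr\bigl(\widetilde{\mathbf P}=\mathbf p\mid\mathbf D,S_k\bigr)=\frac{(N-\bar N)!}{N!\,(\bar N!)^{K-1}}\quad\text{for every }\mathbf p\in\mathcal D_{\mathrm{RS}}.
\]
\end{enumerate}
This value is the same for every $\mathcal T$. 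So once $\mathbf P$ is in place, your unspecified padding rule is harmless, and the case split between $\bar N=N$ and $\bar N<N$ is unnecessary.
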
   
\begin{IEEEproof}[Sketch of Proof] 
	We propose a new achievable scheme by transforming a scheme for the $(N,K\bar N,1)$ RD-NP-UCC problem into a scheme for the $(N,K,L)$ demand-private coded caching problem. 
	In the RD-NP-UCC problem, there are $K\bar N$ virtual users, and each real user requesting $L$ files is associated with a group of $\bar N$ virtual users. Each real user randomly selects $L$ virtual users from its  associated group and caches the contents corresponding to these virtual users, where a random permutation is applied to the file labels.   
	The demand vector for the $K\bar N$ virtual users is then generated so that it belongs to the restricted demand subset in which only $\bar N$ distinct files are requested.   
	Under the same random permutation of the file labels, this demand vector is constructed  by placing the $L$ requested files of each real user at the positions corresponding to the previously selected virtual users and suitably randomizing the remaining positions.  
	The server then delivers the broadcast message generated by the scheme for the RD-NP-UCC problem under this demand vector.    
	Finally, applying this transformation to the YMA scheme~\cite{Qian2018} establishes the achievability result stated in Theorem~\ref{theorem_achA}.
\end{IEEEproof}  
  
We next present a converse bound for the \((N,K,L)\) demand-private coded caching problem.
	\begin{theorem}  \label{th_converse}
	For the $(N,K,L)$ demand-private coded caching problem, any achievable memory-rate pair $(M,R)$ must satisfy 
	\begin{align*}
		R \ge  (s-1+\lambda ) L - \frac{L\big(2 \lambda s+s(s-1) - t(t-1)\big)}{2 (N-L(t-1))} M,  
	\end{align*}   
	for any $s \in [1: \min\{ \lfloor N/L \rfloor, K \}  ] $, $\lambda \in [0,1]$, where $t \in [1:s] $ is the minimum value such that\footnote{Such a $t$ always exists, since \eqref{converse_condition} is satisfied when $t=s$.}     
	\begin{align} \label{converse_condition}
		L 	 \left(  s(s-1)-t(t-1)  +   2 \lambda s \right) \le  2 ( N-(t-1)L)t.    
	\end{align} 
\end{theorem}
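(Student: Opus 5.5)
We generalize the argument of \cite[Theorem~2]{Qian2019} (the Yu et al.\ converse for single demands) to the multiple-demand setting. Because the bound is claimed to hold with or without privacy, we drop the privacy constraint \eqref{privacy} and use only correctness \eqref{correctness} and the cache size constraint. The idea is to treat each real user requesting \(L\) distinct files as a ``super user'' that decodes a block of \(L\) files. Fix \(s\le\min\{\lfloor N/L\rfloor,K\}\) and select \(s\) users, which we relabel \(0,\dots,s-1\). We partition \(sL\) of the \(N\) files into blocks \(\mathcal B_0,\dots,\mathcal B_{s-1}\) of size \(L\). For every permutation of the blocks we obtain a demand matrix in which user \(i\) requests \(\mathcal B_{\pi(i)}\). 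The remaining \(N-sL\) files act as ``free'' files; the parameter \(\lambda\) interpolates with an extra partial block, as in \cite{Qian2019}.

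\textbf{Chain of inequalities.} For one demand matrix, start from \(RF\ge H(X_{\mathbf D})\) and add caches sequentially:
\[
RF+\sum_{i} H(Z_i\mid\cdot)\;\ge\; H\bigl(W_{\mathcal B_{\pi(0)}},\dots,W_{\mathcal B_{\pi(i)}},\ldots\bigr).
\]
Each step uses correctness: given \(X_{\mathbf D}\) and \(Z_0,\dots,Z_i\), user \(i\) decodes \(L\) new files. The resulting quantities are conditional entropies of the form \(H(Z_i\mid W_{\mathcal A})\), and, following \cite{Qian2019}, we need to average over all permutations \(\pi\) of the blocks. After symmetrization (averaging over file relabelings and user orders), we define
\[
a_j=\frac{1}{F}\,\mathbb E\, H\bigl(Z_i\mid W_{\mathcal A}\bigr),\qquad |\mathcal A|=jL,
\]
with \(a_0\le M\). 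Submodularity gives the key ingredient: the \(a_j\) are nonincreasing, and a Han-type inequality gives a convexity-like relation. Concretely, \(H(Z\mid W_{\mathcal A})-H(Z\mid W_{\mathcal A\cup \mathcal B})\) is at most the mutual information, and it decreases on average as \(|\mathcal A|\) grows. Since there are \(N-jL\) remaining files, the decrement is bounded by \(\frac{L}{N-jL}\) times the remaining entropy. This factor is the source of the \(N-L(t-1)\) denominator.

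\textbf{Combining.} Summing the chain gives
\[
R\ge sL+\lambda L-\sum_{j=0}^{s-1}(\text{weight}_j)\,a_j,
\]
with weights chosen so that the total weight is \(s(s-1)/2+\lambda s\) in units of \(L\). Following the Yu et al.\ technique, the user-indexed terms \(a_0,\dots,a_{t-1}\) are then bounded by linear interpolation with slope \(\frac{L}{N-L(t-1)}\) per step. The value \(t\) is the smallest index for which the accumulated weight \(L(s(s-1)-t(t-1)+2\lambda s)/2\) can be absorbed, i.e.\ condition \eqref{converse_condition}. This produces the coefficient
\[
\frac{L\bigl(2\lambda s+s(s-1)-t(t-1)\bigr)}{2(N-L(t-1))}
\]
on \(M\), together with the constant term \((s-1+\lambda)L\). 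The loss of one \(L\) arises because the first user's decoding contributes no cache saving.

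\textbf{Main obstacle.} The hard step is the symmetrization together with the Han-type inequality in the block setting. Files are removed \(L\) at a time, while the cache terms must be compared at every intermediate set size. Our first attempt will be to average over all orderings of individual files, not just blocks, so that the single-file inequalities of \cite{Qian2019} apply, and then to group them in \(L\)-steps; this is what introduces the factor \(L\) and the denominator \(N-L(t-1)\). Checking that the optimal choice of \(t\) matches \eqref{converse_condition} is then a routine optimization of a linear program in the \(a_j\).
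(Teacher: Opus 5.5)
You have the right skeleton, and it is the paper's route too: drop privacy, view each user as decoding an $L$-block, chain via correctness, symmetrize over file and user permutations, then combine with weights and $t$ minimal. However, the step that actually produces the coefficient of $M$ is missing, and the quantities you set up do not supply it.

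The paper's chain keeps the \emph{joint} cache of the first $i$ users in the conditioning. After symmetrization this gives $RF\ge\sum_{i=1}^{s}H^*(W_{[(i-1)L:iL-1]}\mid Z_{[i]},W_{[(i-1)L]})$, not a cut-set form with cache entropies moved to the left. Each term is then bounded in two ways.
(i) Han's inequality over files, applied to $L$-subsets, gives $\ge LF-\frac{L}{N-(i-1)L}H^*(Z_{[i]}\mid W_{[(i-1)L]})$. This is the ingredient you describe, but your justification is reversed: under symmetrization the block decrement of $H^*(Z\mid W_{\mathcal A})$ is nondecreasing in $|\mathcal A|$, and that is exactly why it is at most $\frac{L}{N-|\mathcal A|}$ of the remaining entropy.
(ii) The identity $=LF-H^*(Z_{[i]}\mid W_{[(i-1)L]})+H^*(Z_{[i]}\mid W_{[iL]})$, followed by Han's inequality \emph{across caches}, $H^*(Z_{[i]}\mid W_{[iL]})\ge\frac{i}{i+1}H^*(Z_{[i+1]}\mid W_{[iL]})$.

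Inequality (ii) appears nowhere in your plan, and it cannot be expressed through your single-cache $a_j=\frac1F\mathbb E H(Z_i\mid W_{\mathcal A})$. Yet it is what produces weight $x$ at step $x$ (via $H^*(Z_{[x]}\mid\cdot)\le xMF$), i.e.\ the term $s(s-1)-t(t-1)=2\sum_{x=t}^{s-1}x$ in the numerator.

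The combination is then a specific telescoping, not a generic LP:
for $x\in[t:s-1]$, take $(1-a_x)\cdot$(i)$+a_x\cdot$(ii) with $a_x=\frac{L(2\lambda s+s(s-1)-x(x+1))}{2x(N-xL)}$. Then $b_x=\frac{L(2\lambda s+s(s-1)-x(x-1))}{2x(N-L(x-1))}$ satisfies $\frac{(1-a_x)L}{N-(x-1)L}+a_x=b_x$ and $\frac{x}{x+1}a_x=b_{x+1}$;
add $\lambda\cdot$(i) at $x=s$, where $b_s=\frac{\lambda L}{N-(s-1)L}$;
use (ii) alone for $i<t$. With $\frac1i H^*(Z_{[i]}\mid\cdot)\le MF$ this telescopes to $(t-1)LF-(t-1)MF+\frac{t-1}{t}H^*(Z_{[t]}\mid W_{[(t-1)L]})$;
close with $H^*(Z_{[t]}\mid\cdot)\le tMF$.

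Minimality of $t$ is precisely what guarantees $a_x\in[0,1]$ on $[t:s-1]$ and $b_t\ge\frac{t-1}{t}$. Your ``Combining'' paragraph asserts the resulting coefficient instead of deriving it.

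Your constant is also off. You get $(s+\lambda)L$ and remove one $L$ by an unsupported remark about the first user. In the paper, $(s-1+\lambda)L$ arises because only the first $s-1$ chain terms are kept in full and just a $\lambda$-fraction of the $s$-th, with the rest dropped by nonnegativity. So $\lambda$ acts inside the $s$ blocks, not as an extra partial block; there is no room for an extra block when $s=\min\{\lfloor N/L\rfloor,K\}$.

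Finally, averaging over single-file orderings and regrouping is unnecessary. Symmetry of $H^*$ plus Han's inequality on $L$-subsets gives the factor $\frac{L}{N-(i-1)L}$ directly.
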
  
\begin{IEEEproof}[Sketch of Proof]
	We follow the proof of \cite[Theorem 2]{Qian2019} and extend it to the scenario where each user requests \(L\) files. 
The detailed proof is provided in Appendix~\ref{sec_appconverse}.  
\end{IEEEproof}  
\begin{remark}
	Since the proof does not rely on the privacy constraint, the above converse bound also applies to the  \((N,K,L)\) coded caching problem without demand privacy. 
\end{remark}

The following theorem establishes the order optimality of the proposed scheme.     
\begin{theorem}\label{orderoptimal} 
For the $(N,K,L)$ demand-private coded caching problem, the memory-rate tradeoff achieved by the scheme in Theorem~1, denoted by $R^{\mathrm p}_{N,K,L}(M)$, is order optimal within a factor of $6$  for $M \in [0,N]$. 
\end{theorem}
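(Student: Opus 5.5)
We will show $R^{\mathrm p}_{N,K,L}(M)\le 6\,R^{*\mathrm p}_{N,K,L}(M)$ for every $M\in[0,N]$. We compare a few simple upper bounds on $R^{\mathrm p}_{N,K,L}(M)$, taken from the corner points of Theorem~\ref{theorem_achA} and their convex combinations, with lower bounds obtained from Theorem~\ref{th_converse} for well-chosen $(s,\lambda)$. Since the achievable curve is a lower convex envelope, it lies below any upper bound obtained from chords between achievable points. Two such bounds are used throughout. First, $r=0$ gives $(0,\;K\bar N\cdot\frac{K\bar N - (K-1)\bar N}{K\bar N})=(0,\bar N)$, so $R\le \bar N=\min\{N,KL\}$. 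Second, $r=K\bar N$ gives $(N,0)$. For a general $r$, set $t=r/(K\bar N)$. Then
\begin{align*}
M_r\approx N\bigl(1-(1-t)^L\bigr)\le NLt ,\qquad R_r=\frac{\binom{K\bar N}{r+1}-\binom{(K-1)\bar N}{r+1}}{\binom{K\bar N}{r}}\le \frac{K\bar N-r}{r+1}.
\end{align*}
Hence, with $r\approx K\bar N M/(NL)$, we get $R\lesssim \bar N(1- M/(NL)\cdot\ldots)\,\frac{NL}{M}\cdot\frac1{\bar N}\cdot\bar N$, i.e. roughly $R\le NL/M$. This is the usual YMA-type estimate, with $L$ playing the role of a per-user multiplier. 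Combining the three bounds, we expect
\begin{align*}
R^{\mathrm p}_{N,K,L}(M)\le \min\Bigl\{\bar N,\ \frac{NL}{M},\ N-M\Bigr\}
\end{align*}
up to a small constant, and we will make this precise with exact binomial inequalities.

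\emph{Regime split.} Put $s_{\max}=\min\{\lfloor N/L\rfloor,K\}$. The range of $M$ is divided into three regimes.
\begin{itemize}
\item[(i)] $M\le N/(c\,s_{\max})$ for a suitable constant $c$. Here we take $s=s_{\max}$ and $\lambda=0$ in Theorem~\ref{th_converse}. The lower bound is then about $L s_{\max}-\frac{L s_{\max}^2}{2N}M\ge L s_{\max}/2$, and we compare this with $\bar N\le \min\{N,KL\}\le 2L s_{\max}$. The last inequality holds because $L\lfloor N/L\rfloor\ge N/2$.
\item[(ii)] The intermediate regime $N/(c\,s_{\max})\le M\le N/2$. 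We choose $s\approx N/(2M)$ so that the subtracted term is at most half of the main term. This gives a lower bound of order $Ls\approx NL/(2M)$, which is compared with the upper bound $NL/M$.
\item[(iii)] $M\ge N/2$. We take $s=1$, so that the bound becomes $R\ge \lambda L-\frac{L\lambda}{N}M$. With $\lambda=1$ this reads $R\ge L(1-M/N)$. It is compared with the chord between the $(N,0)$ point and a nearby corner point, which gives an upper bound of order $L(N-M)/N$ times a constant.
\end{itemize}

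In each regime we have to check the auxiliary condition \eqref{converse_condition} that determines $t$. The coefficient involves $N-L(t-1)$, and $t\ge1$. The cleanest route is to pick $\lambda$ and $s$ so that $t=1$ already satisfies \eqref{converse_condition}. In that case the slope is $\frac{L(2\lambda s+s(s-1))}{2N}$, and the arithmetic is exactly that of Yu et al.\ with $K$ replaced by $s$ and every rate multiplied by $L$.

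\emph{Main obstacle.} The bookkeeping of the constant is the hard part. We have to show that the ratio stays $\le 6$ at the regime boundaries and when $s$ must be rounded to an integer. This is delicate when $N/L$ is small, where $\lfloor N/L\rfloor$ loses up to a factor of $2$, and when the upper bound for the privacy scheme is evaluated at $K\bar N$ virtual users rather than at $K$. For the latter, we will bound the YMA-type rate $R_r$ using only $\frac{K\bar N-r}{r+1}$ and $M_r\ge N\bigl(1-(1-r/(K\bar N))^L\bigr)\ge N\min\{1,Lr/(K\bar N)\}/2$. This keeps all losses as explicit factors of $2$, and we will then verify that the product of the losses in each regime is at most $6$.
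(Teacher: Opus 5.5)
Your upper bounds are the right ones, and they are exactly the ones the paper uses:
\begin{itemize}
\item $\bar N$ at $M=0$;
\item memory sharing, giving $\bar N(N-M)/N$;
\item $R^{\mathrm p}_{N,K,L}(M)\le NL/M$, obtained from $r=\lfloor K\bar N M/(NL)\rfloor$ and the union bound $M_r\le NLr/(K\bar N)$.
\end{itemize}
The lower bound on $M_r$ in your last paragraph points the wrong way: to use $(M_r,R_r)$ at memory $M$ you need $M_r\le M$. Your regime (iii) is also fine; with $r=\lfloor K\bar N/(2L)\rfloor$ the chord to $(N,0)$ gives a ratio of at most $4$.

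\textbf{The gap is on the converse side.} Your only device for controlling $t$ in Theorem~\ref{th_converse} is to arrange $t=1$. At $t=1$, condition \eqref{converse_condition} reads $Ls(s-1+2\lambda)\le 2N$, which fails once $s\gtrsim\sqrt{2N/L}$. This happens in two places:
\begin{itemize}
\item In regime (i), where $s=s_{\max}$ is forced. For example, $L=1$ and $K\ge N\ge 4$ give $s_{\max}=N$, and $t=1$ is then infeasible for every $\lambda$.
\item In regime (ii), for $M\lesssim\sqrt{NL}$.
\end{itemize}
For $t>1$ the bound is strictly weaker than your $t=1$ formula. Write $h(t)=\frac{2\lambda s+s(s-1)-t(t-1)}{N-L(t-1)}$. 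One checks that \eqref{converse_condition} holds at $t$ if and only if $h(t+1)\le h(t)$, and that the condition is monotone in $t$. Hence the minimal $t$ is precisely a maximizer of the slope over $[1:s]$.

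For instance, for $L=1$, $N=K=s$, $\lambda=0$ this gives $t\approx s-\sqrt{2s}$ and slope $s-O(\sqrt s)$, instead of $(s-1)/2$. So your regime-(i) bound $Ls_{\max}-\frac{Ls_{\max}^2}{2N}M$ does not follow from Theorem~\ref{th_converse}. Separately, with $\lambda=0$ the leading term is $(s_{\max}-1)L$, not $Ls_{\max}$, and it vanishes when $s_{\max}=1$.

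What you need is a $t$-uniform estimate. For example, for $\lambda=1$ and $s\le N/L$,
$\frac{L(s+t)(s-t+1)}{2(N-L(t-1))}\le\frac{Ls(s+1)}{N}$ for all $t\in[1:s]$.
This doubles the loss relative to the $t=1$ value you budgeted.

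\textbf{The constant is never established.} The content of the theorem \emph{is} the constant. Leaving $c$ unspecified, leaving $s\approx N/(2M)$ unrounded, and deferring ``the product of the losses is at most $6$'' leaves the statement unproved. The corrected accounting also does not close by itself. With $s=\lfloor N/(2M)\rfloor$, $\lambda=1$ and the estimate above, you only get $R\ge (s-1)L/2$ against $NL/M<2(s+1)L$. That is a ratio of $4(s+1)/(s-1)$, which equals $12$ at $s=2$, so further case analysis is needed.

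The paper organizes this differently.
\begin{itemize}
\item It first shows that Theorem~\ref{th_converse} forces $R^{*\mathrm p}_{N,K,L}$ above the lower convex envelope of $(0,L\lfloor\bar N/L\rfloor)$ and the points $(M_{s,t},R_{s,t})=\bigl(\frac{N-L(t-1)}{s},\,L\bigl(\frac{s-1}{2}+\frac{t(t-1)}{2s}\bigr)\bigr)$. For a supporting line $L(A+BM)$ with $A>0$, it takes $s=\lceil A\rceil$ and $\lambda=A-s+1$.
\item This makes $t$ a label of a corner point rather than something to be controlled.
\item By convexity of the achievable curve, only these corners need checking.
\item At the corners it splits on $N\le 3sL$ (memory sharing between $(0,\bar N)$ and $(N,0)$, ratio $\le 6$) versus $N\ge 3sL$ (the bound $NL/M_{s,t}$, ratio $\le 4.8$), which finishes the proof.
\end{itemize}
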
 
\begin{IEEEproof}[Sketch of Proof]   
 We first derive a lower convex envelope of the converse bound in
	Theorem~2 by following arguments similar to those in~\cite{Qian2019}.
	The theorem then follows by comparing this lower bound with the achievability result stated in  Theorem~1.    	 
The detailed proof is provided in Appendix~\ref{sec_apporderoptimal}.    
\end{IEEEproof}  
\begin{remark}
	For the case \(L>1\), the best previously known result establishes order optimality within a factor of \(22\) only when \(N \leq LK\), or when \(N > LK\) and \(M \geq N/K\) \cite{Wan2021}.  The above theorem improves this factor to \(6\) and  establishes order optimality in the remaining parameter regimes.   
\end{remark}    
 
\section{Proof of Theorem~\ref{theorem_achA}} \label{sec_theorem_achA}
 
We first introduce some notation that will be used throughout this section.     
For a demand matrix $\mathbf{D}$, let
$
\mathcal{N}_{\mathrm e}(\mathbf{D})=\bigcup_{k\in[K],\,l\in[L]}\{d_{k,l}\}
$  
denote the set of all requested files, and let
\begin{align}\label{defNe} 
\mathcal{F}_{\bar N}(\mathbf{D})
=\left\{\mathcal N\subseteq [N]: \mathcal N_{\mathrm e}(\mathbf D)\subseteq \mathcal N,\ |\mathcal N|=\bar N\right\}.   
\end{align} 
denote the family of all $\bar N$-subsets of $[N]$ containing $\mathcal N_{\mathrm e}(\mathbf D)$.
For example, Table \ref{tabexample0} lists $\mathcal{F}_{\bar{N}}(\mathbf{D})$  for different demand matrices   with  $N=5,K=2$ and $L=2$.    
\begin{table}[t]
	\centering 
	\caption{ $\mathcal{F}_{\bar{N}}(\mathbf{D})$ for Different $\mathbf{D}$ with  $N=5,K=2,L=2$} 	\label{tabexample0}    
	\begin{tabular}{|c|c|} 
		\hline  
		$\mathbf{D}$ 
		& $\mathcal{F}_{\bar{N}}(\mathbf{D})$  \\
		\hline   
		$[0,1;0,1]$ 
		& \begin{tabular}[t]{l}
			$\{ \{0,1,2,3\},$  
			$ \{0,1,2,4\},$  
			$\{0,1,3,4\}\}$ 
		\end{tabular} \\ 
		\hline 
		$[0,1;0,2]$ 
		& \begin{tabular}[t]{l}
			$\{ \{0,1,2,3\},$  
			$\{0,1,2,4\} \}$
		\end{tabular}  \\
		\hline 		 
		$[0,1;2,3]$ 
		& $\{ \{0,1,2,3\} \}$  \\
		\hline  
	\end{tabular}
\end{table}        

In the remainder of this section, we first formally define an   \((N,K\bar N,1)\) RD-NP-UCC problem and then illustrate, through a motivating example, how a scheme for this problem can be transformed into a demand-private coded caching scheme.      
We then establish this  transformation in general and apply it to the YMA scheme to prove Theorem~\ref{theorem_achA}.

\subsection{The \((N,K\bar N,1)\) RD-NP-UCC Problem}
\label{def_NonPrivate}  
  We now define the \((N,K\bar N,1)\) RD-NP-UCC problem.  In this coded caching problem, there are \(N\) files and \(K\bar N\) users, each requesting one file.   
The file library is denoted by \(\mathbf{W}=[W_0,W_1,\ldots,W_{N-1}]^{\mathsf T}\) and follows the same statistical assumptions as in the system model.
The demand vector \(\widetilde{\mathbf D} \) is a \(K\bar N\)-length vector, and we use \(\widetilde d_u\) to denote its \(u\)-th component. 
 This problem is defined under uncoded placement and restricts the set of valid demand vectors to a subset \(\mathcal D_{\mathrm{RS}} \subseteq [N]^{K\bar N}\), both defined below. 
  
\begin{definition}[Uncoded Placement]
	\label{def_uncoded}
	The placement phase is said to be \emph{uncoded} if, for each user $u\in[K\bar N]$ and file $n\in[N]$, user $u$ selects a subset of the symbols of $W_n\in\mathbb{F}_q^F$ and stores them in its cache, without coding.  	
	Formally, let \(W_{n,i}\), \(i\in[F]\), denote the symbols of file \(W_n\). Then there exists an index set \(\mathcal{M}^{\mathrm{(np)}}_{u,n}\subseteq[F]\) such that user \(u\) stores \(\{W_{n,i}: i\in\mathcal{M}^{\mathrm{(np)}}_{u,n}  \}\).     
\end{definition}   
\begin{definition}[Restricted Demand Subset]  
	\label{def_RS} 
	A demand vector $\widetilde{\mathbf{D}}$ belongs to the restricted demand subset $\mathcal{D}_{\mathrm{RS}}$ if it can be partitioned into $K$ subvectors of length $\bar{N}$,
	\begin{align*}  
		\widetilde{\mathbf{D}} =  \big[ \widetilde{\mathbf{d}}_0^{\mathsf{T}}, \widetilde{\mathbf{d}}_1^{\mathsf{T}}, \dots, \widetilde{\mathbf{d}}_{K-1}^{\mathsf{T}} \big]^{\mathsf{T}}, 
	\end{align*}        
	and there exists a file set $\mathcal{N} \subseteq [N]$ with $|\mathcal{N}|=\bar{N}$ such that each subvector $\widetilde{\mathbf{d}}_k$ is a permutation of $\mathcal{N}$. 
\end{definition}      

To illustrate Definition~\ref{def_RS},   consider $N=5$, $K=2$, and $L=2$, giving $\bar N = 4$.    The demand vector $	\widetilde{\mathbf{D}} = (1,2,3,4,\, 2,3,4,1)^{\mathsf{T}} $ 
belongs to $\mathcal{D}_{\mathrm{RS}}$, since it consists of two blocks
$\widetilde{\mathbf{d}}_0 = (1,2,3,4)^{\mathsf{T}}$ and $\widetilde{\mathbf{d}}_1 = (2,3,4,1) ^{\mathsf{T}}$,
both of which are permutations of the file set $\{1,2,3,4\}$.

An \((N,K\bar N,1)\) RD-NP-UCC scheme consists of an uncoded placement as in Definition~\ref{def_uncoded}, an encoding function \(\psi^{\mathrm{(np)}}\), and decoding functions \(\gamma_u^{\mathrm{(np)}}\), \(u\in[K\bar N]\).

Under uncoded placement and without the privacy constraint, it suffices to consider broadcast messages generated using only the requested files.   For any demand in $\mathcal{D}_{\mathrm{RS}}$, the requested files are $ \mathbf I_N(\widetilde{\mathbf d}_0,:)\mathbf W$. 
It follows that each user’s decoding depends only on the broadcast message and its cache contents corresponding to these requested files.    
This observation explains the form of the encoding and decoding functions given below.

Thus,  for each demand vector \(\widetilde{\mathbf D}\in\mathcal D_{\mathrm{RS}}\), the server delivers 
\begin{align*}	
	X^{\mathrm{(np)}}_{\widetilde{\mathbf D}}
	= 
	\Big(
	\psi^{\mathrm{(np)}}\big(
	\widetilde{\mathbf D},\,
	\mathbf I_N(\widetilde{\mathbf d}_0,:)\mathbf W
	\big), 
	\widetilde{\mathbf D}
	\Big).    
\end{align*}   
and user \(u\in[K\bar N]\) recovers its requested file as  
\begin{align}	\label{achA_np_decoder}  
	W_{\widetilde {d}_u} 
	= 
	\gamma_u^{\mathrm{(np)}} \left( 
	X^{\mathrm{(np)}}_{\widetilde{\mathbf D}}, 
	\left( 	\left\{ W_{\widetilde {d}_i,j} : j \in \mathcal{M}_{u,\widetilde {d}_i}^{\mathrm{(np)}}  \right\}     \right)_{ i \in [\bar{N}]}
	\right).     
\end{align}       
The rate of such a scheme is defined as
\begin{align*}
	R
	=
	\frac{1}{F}
	\max_{\widetilde{\mathbf D}\in\mathcal D_{\mathrm{RS}}}
	H\!\left(X^{\mathrm{(np)}}_{\widetilde{\mathbf D}}\right).
\end{align*}    
\begin{remark} \label{remarkYMA}
	The YMA scheme in \cite{Qian2018} for \(N\) files and \(K \bar{N} \) users is an \((N,K \bar{N},1)\) RD-NP-UCC scheme, since its placement is uncoded and its broadcast message is generated using only the requested files. 
\end{remark}  
      
\subsection{Motivating Example: $N = 5,K = 2,L = 2, r = 1  $}	   
We consider the $(N,K,L)=(5,2,2)$ demand-private coded caching problem.   
The proposed scheme is based on the YMA scheme for \(N=5\) files and \(K\bar N=8\) users with parameter \(r=1\), which is an \((N,K\bar N,1)=(5,8,1)\) RD-NP-UCC scheme. 
A random permutation $\mathbf P=(P_0,P_1,\ldots,P_4)$ is chosen uniformly from the set of all permutations of $[0:4]$, and the YMA scheme is applied to the resulting relabeled file library.

\subsubsection*{Placement Phase} 
Each file $W_n$, $n\in[0:4]$, is partitioned into $\binom{K \bar{N}}{r} = 8$ subfiles, denoted by  
$  
W_{n,\{i\}},  i\in[0:7]
$. 
For $k \in [0:1] $, the server independently and uniformly selects $ S_k=(S_{k,0},S_{k,1}) $,  
where $S_{k,0}$ and $S_{k,1}$ are distinct elements  in $[0:3]$.       
Then, user $k$ caches the aggregation of the cache contents indexed by $4k+S_{k,0}$ and $4k+S_{k,1}$ in the YMA scheme. 
More specifically, after applying the relabeling $\mathbf P$, the cache content of user $k$ is  given by  
\begin{align*}
Z_k & =\bigl((Z_{k,i})_{i\in[0:4]},\,S_k\bigr),  \quad k = 0,1, \text{ where} \nonumber \\
&	Z_{0,P_n}=\{W_{n,\{S_{0,0}\}},\,W_{n,\{S_{0,1}\}}\},\nonumber\\
&	Z_{1,P_n}=\{W_{n,\{S_{1,0}+4\}},\,W_{n,\{S_{1,1}+4\}}\}, \quad n\in[0:4].  
\end{align*}   
Since the size of $S_k$ is negligible for sufficiently large $F$ and $Z_k$ contains $10$ subfiles of length $\frac{F}{8}$, it follows that $M=\frac{5}{4}$.

\subsubsection*{Delivery Phase}  
Given a demand matrix $\mathbf D$, the server delivers the broadcast message of the YMA scheme, where the key step is to construct the demand vector for the YMA scheme. 
First, the server draws a random set
$\mathcal{T} \sim \mathrm{Unif}(\mathcal{F}_4(\mathbf{D}))$,
where $\mathcal{F}_4(\mathbf{D})$ is defined in \eqref{defNe}.  
It then generates an expanded demand vector 
\begin{align*}
	\widetilde{\mathbf Q}
	=
	(\widetilde{q}_0,\widetilde{q}_1,\ldots,\widetilde{q}_7)^{\mathsf T}
	=
	[\widetilde{\mathbf q}_0^{\mathsf T},\widetilde{\mathbf q}_1^{\mathsf T}]^{\mathsf T},
\end{align*} 
where the blocks $\widetilde{\mathbf q}_0$ and $\widetilde{\mathbf q}_1$ are drawn independently. 
For each $k = 0,1$, given $(\mathcal{T},\mathbf d_k,S_k)$, the block $\widetilde{\mathbf q}_k$ is sampled uniformly from all  possible permutations of $\mathcal{T}$ such that the two files requested by user $k$ are in the positions determined by $S_{k,0}$ and $S_{k,1}$.  %

Since the YMA scheme is applied to the file library relabeled by $\mathbf P$, the demand vector for the YMA scheme is obtained by applying $\mathbf P$ to $\widetilde{\mathbf Q}$, i.e., 
\begin{align*}  
	\widetilde p_u=P_{\widetilde q_u},\quad u\in[0:7],
	\qquad\text{and}\quad
	\widetilde{\mathbf P}=(\widetilde p_0,\widetilde p_1, \ldots,\widetilde p_7)^{\mathsf T}.  
\end{align*}     
Thus, the   broadcast  message, i.e., $X_{\mathbf{D}}$,   is  given by    
\begin{align}  \label{ex1_X} 
	Y_{ \mathcal{R}^+} = &  \sum_{u \in \mathcal{R}^+ } W_{\widetilde{q}_u,  \mathcal{R}^+ \setminus \{u\}   },  \quad  \mathcal{R}^+ \subseteq [0:7], |\mathcal{R}^+|= 2,  \text{ and} \nonumber  \\   
	 X_{\mathbf{D}} =  & \big( \{  Y_{ \mathcal{R}^+}:   \mathcal{R}^+ \cap [0:3] \neq \emptyset \},   \widetilde{\mathbf{P}}    \big),    
\end{align}     
where $[0:3]$ corresponds to the set of leaders in  the YMA scheme, since $\widetilde{\mathbf{P} }\in\mathcal D_{\mathrm{RS}}$.  
  
Following from \eqref{ex1_X}, we note that  $ X_{\mathbf{D}}$ contains $ \binom{8}{2}-\binom{4}{2} = 22$ segments, each of length $\frac{F}{8}$.  Moreover, the size of $\widetilde{\mathbf P}$ is negligible for sufficiently large $F$, thus, $R=\frac{11}{4}$.

To illustrate the above design, consider the demand matrix $ \mathbf {D}= [0,1;0,2]$, where $\mathcal{F}_4(\mathbf{D}) = \{   \{0,1,2,3\},\{0,1,2,4\}  \}$.     
Suppose the server draws $S_0=(0,2)$, $S_1=(1,3)$ and $\mathcal T=\{0,1,2,3\}$. 
Then one possible realization of $\widetilde{\mathbf Q}$ is $(0,2,1,3,1,0,3,2)^{\mathsf T}$, which yields $\widetilde{\mathbf P}=(P_0,P_2,P_1,P_3,P_1,P_0,P_3,P_2)^{\mathsf T}$.  
 
\subsubsection*{Sketch of Correctness}  
For $k\in [0:1]$ and $l\in [0:1]$, user $k$ decodes its requested file $W_{d_{k,l}}$ by following the decoding procedure of the virtual user indexed by $4k+S_{k,l}$ in the YMA scheme for demand $\widetilde{\mathbf P}$. Thus, correctness follows directly from that of the YMA scheme. 
For example,     
under the above realization of $\widetilde{\mathbf Q}$ and $\widetilde{\mathbf P}$, consider user $k=0$ and $l=1$, for which the requested file is $W_{d_{0,1}}=W_1$.  
Since $\widetilde p_2=P_1$, user $0$ directly obtains $W_{1,\{2\}}$ from  $Z_{0,P_1}$.
Next, for each $i\in[0:7]\setminus\{2\}$, we have 
$
W_{1,\{i\}} = W_{\widetilde q_2,\{i\}}  = Y_{\{2,i\}} - W_{\widetilde q_i,\{2\}},
$  
where $W_{\widetilde q_i,\{2\}}$ is available  from $Z_{0,\widetilde p_i}$.     

\subsubsection*{Sketch of Privacy}
 Since the broadcast message is the same as that of the YMA scheme under demand $\widetilde{\mathbf P}$, any information a user can infer about the other users' demands from it is contained in $\widetilde{\mathbf P}$.   
	Moreover, for each user $k$, the random variable $S_k$, known only to user $k$, randomizes the positions in $\widetilde{\mathbf P}$ corresponding to that user's demand. 
The random set $\mathcal T$ hides which files appear in the other components of $\widetilde{\mathbf P}$.   
 Within each block, these files are placed uniformly at random among the remaining positions.  
 Finally, the random permutation $\mathbf {P}$ relabels the file indices and hides the original file labels. It follows that, conditioned on user $k$'s own demand and $S_k$, the distribution of $\widetilde{\mathbf P}$ is the same for all realizations of the other users' demands.   Thus, user $k$ learns nothing about the other users' demands from observing  $\widetilde{\mathbf P}$, and hence the privacy constraint is satisfied.

\subsection{From Non-Private to Demand-Private Coded Caching}      
We now formalize the transformation described above in the following lemma. 
\begin{lemma}\label{T1_Lemma}   
Consider an \((N,K\bar N,1)\) RD-NP-UCC scheme with rate \(R\), and define  
	\begin{align*}
		\mathcal S^{(\bar N,L)}=
		\left\{
		(s_0,s_1,\dots,s_{L-1})\in[\bar N]^L: 
		s_i\neq s_j,  \forall   i\neq j
		\right\}. 
	\end{align*}
	Then, for the $(N,K,L)$ demand-private coded caching problem, there exists a scheme achieving rate $R$ and cache size
	\begin{align} \label{T1_Lemmasize}
		M=\frac{1}{F}\max_{k\in[K]}\max_{S_k\in\mathcal S^{(\bar N,L)}}
		\Bigg(\sum_{n\in[N]}\bigg|\bigcup_{l\in[L]}\mathcal M^{\mathrm{(np)}}_{k\bar N+S_{k,l},n}\bigg|\Bigg).
	\end{align} 
\end{lemma}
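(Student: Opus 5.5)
The plan is to generalize the motivating example directly. Fix an \((N,K\bar N,1)\) RD-NP-UCC scheme with placement sets \(\mathcal M^{\mathrm{(np)}}_{u,n}\), encoder \(\psi^{\mathrm{(np)}}\) and decoders \(\gamma^{\mathrm{(np)}}_u\).

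\textbf{Randomness and placement.} Let \(P=\mathbf P\sim\mathrm{Unif}(\mathrm{Perm}([N]))\). Let \(S_k\sim\mathrm{Unif}(\mathcal S^{(\bar N,L)})\) be drawn independently for each \(k\). Under the relabeled library, file \(W_n\) plays the role of file \(P_n\) of the non-private scheme. User \(k\) stores
\[
Z_{k,P_n}=\{W_{n,j}: j\in\textstyle\bigcup_{l\in[L]}\mathcal M^{\mathrm{(np)}}_{k\bar N+S_{k,l},P_n}\}
\]
for all \(n\), together with \(S_k\). That is, user \(k\) holds the union of the caches of virtual users \(k\bar N+S_{k,l}\), \(l\in[L]\). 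Its size, maximized over realizations of \(S_k\) and over \(k\), is exactly \eqref{T1_Lemmasize}, since \(\mathbf P\) only permutes the file indices summed over.

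\textbf{Delivery.} Given \(\mathbf D\), draw \(\mathcal T\sim\mathrm{Unif}(\mathcal F_{\bar N}(\mathbf D))\). This set is nonempty because \(|\mathcal N_{\mathrm e}(\mathbf D)|\le\min\{N,KL\}=\bar N\). For each \(k\), independently draw \(\widetilde{\mathbf q}_k\) uniformly among the orderings of \(\mathcal T\) with \(\widetilde q_{k,S_{k,l}}=d_{k,l}\) for all \(l\). Set \(\widetilde p_u=P_{\widetilde q_u}\), so \(\widetilde{\mathbf P}\in\mathcal D_{\mathrm{RS}}\). Send \(X_{\mathbf D}=X^{\mathrm{(np)}}_{\widetilde{\mathbf P}}\), computed on the relabeled library. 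Its rate is at most \(R\) by definition, and \(\mathcal J\) accounts for the negligible \(\widetilde{\mathbf P}\).

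\textbf{Correctness.} For each \(l\), user \(k\) runs \(\gamma^{\mathrm{(np)}}_{k\bar N+S_{k,l}}\). By \eqref{achA_np_decoder}, this decoder needs only the broadcast and the cache of that virtual user restricted to the requested (relabeled) files. Both are available to user \(k\), since it knows \(\widetilde{\mathbf P}\) and its own cache contains the cache of that virtual user. The output is the relabeled file \(\widetilde p_{k\bar N+S_{k,l}}=P_{d_{k,l}}\), i.e., \(W_{d_{k,l}}\).

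\textbf{Privacy (main obstacle).} The broadcast is a deterministic function of \(\widetilde{\mathbf P}\) and of the files. The cache \(Z_k\) is a function of \(S_k\), \(\mathbf P\) and the files. It therefore suffices to show that the joint law of \((\widetilde{\mathbf P},\mathbf P\text{-restricted view},S_k)\) given \(\mathbf d_k\) does not depend on \(\mathbf D_{\setminus k}\). The complication is that \(\mathbf W\) is uniform and independent of everything, so user \(k\) sees the files only through the relabeling. I would argue as follows: given \((\mathbf d_k,S_k)\), the pair \((\widetilde{\mathbf P},\mathbf W)\) relevant to user \(k\) has the same distribution for every \(\mathbf D_{\setminus k}\).

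To do this, first show that \((\mathcal T,\widetilde{\mathbf Q})\) conditioned on \((\mathbf D,S_k)\) has the following structure. Block \(k\) has \(\mathbf d_k\) at the positions \(S_k\) and the other elements of \(\mathcal T\) in uniformly random order. The other blocks are independent uniform orderings of \(\mathcal T\), because \(S_{k'}\) is uniform and unknown to user \(k\). Only \(\mathcal T\) carries dependence on \(\mathbf D_{\setminus k}\).

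Then use the uniform \(\mathbf P\). Conditioned on \(\mathbf d_k\), the image \(\mathbf P(\mathcal T)\) is a uniform \(\bar N\)-subset containing \(\mathbf P(\mathbf d_k)\), and \(\mathbf P\) is otherwise uniform. The delicate point is that user \(k\) also observes \(Z_k\), which reveals \(\mathbf P\) partially: it links cached segments to labels \(P_n\) via contents of \(W_n\). I would handle this by showing that the joint distribution of \((\widetilde{\mathbf P},\mathbf P,\mathbf W,S_k)\) given \(\mathbf d_k\) is invariant under the choice of \(\mathcal T\in\mathcal F_{\bar N}(\mathbf D)\) up to a relabeling argument. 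Concretely, for two demands \(\mathbf D,\mathbf D'\) with the same \(\mathbf d_k\), construct a bijection \(\sigma\) on \([N]\) fixing the files of \(\mathbf d_k\) and mapping \(\mathcal T\) to \(\mathcal T'\). Compose it with \(\mathbf P\) and with the file indices, which is measure preserving because the files are i.i.d. uniform. Then check that the view \((X_{\mathbf D},Z_k,\mathbf d_k)\) is mapped to an equally likely view. This coupling step, and verifying that the measure is preserved jointly with \(\mathbf W\), is where I expect most of the care to be needed.
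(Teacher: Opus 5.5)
Your construction, cache-size computation and correctness argument coincide with the paper's. The two routes differ only in the privacy proof.

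The paper notes that the whole view $(Z_k,X_{\mathbf D},\mathbf d_k)$ is a deterministic function of $(\mathbf W^{(\mathrm{np})},S_k,\widetilde{\mathbf P},\mathbf d_k)$, where $\mathbf W^{(\mathrm{np})}$ is the relabeled library. It also notes that $\mathbf W^{(\mathrm{np})}$ is i.i.d.\ uniform and independent of $(\mathbf P,\mathbf D,S,\mathcal T,\widetilde{\mathbf Q})$. This reduces privacy to $I(\mathbf D_{\setminus k};\widetilde{\mathbf P}\,|\,S_k,\mathbf d_k)=0$, which the paper checks by explicit counting: $\Pr(\widetilde{\mathbf P}=\mathbf p\,|\,\mathbf D,S_k)=\frac{(N-\bar N)!}{N!\,(\bar N!)^{K-1}}$ for every $\mathbf p\in\mathcal D_{\mathrm{RS}}$.

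You instead propose a coupling via a bijection $\sigma$ of $[N]$ that fixes the entries of $\mathbf d_k$ and maps $\mathcal T$ onto $\mathcal T'$, applied jointly to $\mathbf P$, the file indices and $\widetilde{\mathbf Q}$. That route is sound, but fix two things when you write it out.

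First, the ``delicate point'' you anticipate does not arise. $Z_k$ is indexed by the relabeled labels $P_n$, so it is a function of $(\mathbf W^{(\mathrm{np})},S_k)$ only and reveals nothing about $\mathbf P$. The quantity your coupling must preserve is $(\mathbf W^{(\mathrm{np})},S_k,\widetilde{\mathbf P})$. The choice $P'_{\sigma(n)}=P_n$, $W'_{\sigma(n)}=W_n$, $\widetilde q'_u=\sigma(\widetilde q_u)$ leaves it pointwise unchanged. By contrast, the tuple $(\widetilde{\mathbf P},\mathbf P,\mathbf W,S_k)$ you mention is not invariant: $\mathbf P$ and $\widetilde{\mathbf P}$ together determine $\widetilde{\mathbf Q}$, and hence $\mathcal T\supseteq\mathcal N_{\mathrm e}(\mathbf D)$.

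Second, the bijection is measure preserving only conditionally on $\mathcal T$, because $|\mathcal F_{\bar N}(\mathbf D)|$ varies with $\mathbf D$. For example, with $N=5$, $K=L=2$, it is $3$ for $\mathbf D=[0,1;0,1]$ and $2$ for $\mathbf D=[0,1;0,2]$. So you should first marginalize $S_{k'}$, $k'\neq k$, so that those blocks are uniform on $\mathrm{Perm}(\mathcal T)$. Then show that the law of $(\mathbf W^{(\mathrm{np})},S_k,\widetilde{\mathbf P})$ given $(\mathbf D,\mathcal T)$ is the same for all pairs sharing $\mathbf d_k$, and only then average over $\mathcal T$.

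With these adjustments, your argument avoids counting the preimages of $\mathbf P$ and transfers easily to other randomizations. The paper's computation, in exchange, gives the stronger explicit statement that $\widetilde{\mathbf P}$ is uniform on $\mathcal D_{\mathrm{RS}}$ and independent of $(\mathbf D,S_k)$.
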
   

We now prove Lemma~\ref{T1_Lemma} by constructing a scheme for the $(N,K,L)$ demand-private coded caching problem from an arbitrary $(N,K\bar N,1)$ RD-NP-UCC scheme.  
    
The server first generates a permutation 
\begin{align*}   
	\mathbf{P} = & (P_0,P_1,\ldots,P_{N-1}) \sim \mathrm{Unif}\big(\mathrm{Perm}([N])\big). 
\end{align*}    
Using the permutation, we define an auxiliary file library   
$\mathbf W^{(\mathrm{np})} = \left[W_0^{(\mathrm{np})},W_1^{(\mathrm{np})},\ldots,W^{(\mathrm{np})}_{N-1}  \right]^{\mathsf{T}}$ via the mapping  
\begin{align} \label{achA_Permu}
W^{(\mathrm{np})}_{P_n}  	= W_n ,  \quad n\in[N]. 
\end{align}   
The following construction is based on an RD-NP-UCC scheme operating on \(\mathbf W^{(\mathrm{np})}\). 
  
\subsubsection*{Placement Phase}      
For  $k\in[K]$, the server randomly  selects  $L$ distinct elements from $[\bar N]$ as follows  
\begin{align*}      
	S_k = &(S_{k,0},S_{k,1},\ldots,S_{k,L-1}) \sim \mathrm{Unif}\big(\mathcal{S}^{(\bar N,L)}\big),  
\end{align*}  
where the random variables $S_0, S_1, \dots, S_{K-1}$, and $\mathbf{P}$ are mutually independent.    
For  \(k\in[K]\), user \(k\) caches the contents corresponding to the \(L\) users indexed by \(k\bar N+S_{k,l}\), \(l\in[L]\), in the RD-NP-UCC scheme. Formally,  we define  
\begin{align}\label{achA_Zkl}      
	Z_{k,P_n,l} = & \left\{ W^{(\mathrm{np})}_{P_n,j}:  j \in \mathcal{M}^{(\mathrm{np})}_{k\bar N + S_{k,l},P_n}  \right\} \nonumber \\ 	 \overset{(a)}{=} 	   &  \left\{ W_{n,j}:  j \in \mathcal{M}^{(\mathrm{np})}_{k\bar N + S_{k,l},P_n}  \right\},   
\end{align}    
where $(a)$ follows from \eqref{achA_Permu}.  
The cache content of user $k$ is thus given by     
\begin{align} \label{achA_Z}    
	Z_k =  
	\big((Z_{k,i})_{i\in[N]},\, S_k\big),  \text{where }
	Z_{k,P_n} =  \bigcup_{ l \in [L]}   	Z_{k,P_n,l}.  
\end{align}  
Since the size of $S_k$ is negligible and placement is uncoded, for any realization of $S_k$, the cache size equals the total number of file symbols in $Z_k$ normalized by $F$, thus \eqref{T1_Lemmasize} follows.     
  
\subsubsection*{Delivery Phase}  
For a given demand matrix $\mathbf{D}$, the server first draws a random set $\mathcal T \subseteq [N]$ of size $\bar N$ that contains $\mathcal N_{\mathrm e}(\mathbf D)$, i.e.,     
\begin{align}  \label{achA_setT}      
	\mathcal{T} \sim \mathrm{Unif}\big(\mathcal{F}_{\bar N}(\mathbf{D})\big),  
\end{align}  
where $\mathcal F_{\bar N}(\mathbf D)$ is defined in \eqref{defNe}.      
For $k\in[K]$,  given $(\mathcal T,\mathbf d_k,S_k)$, the server then  independently generates $\widetilde{\mathbf q}_k$  for different $k$ as follows     
\begin{align}\label{achA_Qk}
	& \widetilde{\mathbf{q}}_k |  \mathcal{T},\mathbf d_k ,S_k
	\sim \mathrm{Unif}\big(\mathcal Q_k \big), \quad  \text{where}  \nonumber \\
	& \mathcal{Q}_k  
	=  
	\big\{  (v_0,v_1, \dots, v_{\bar{N}-1} )^{\mathsf T}  \in \mathrm{Perm}(\mathcal T):  \nonumber \\
	& \qquad \qquad 
	v_{S_{k,l}} =d_{k,l},  \forall l\in[L]	\big\}.
\end{align}   
The expanded demand vector $\widetilde{\mathbf Q}  = \left( \widetilde q_0, \widetilde q_1, \dots, \widetilde q_{K\bar{N}-1}  \right)^{\mathsf {T}} $ is thus given by
\begin{align} \label{achA_Q}  
	\widetilde{\mathbf Q} 
	=
	\big[\widetilde{\mathbf q}_0^{\mathsf T},
	\widetilde{\mathbf q}_1^{\mathsf T},
	\dots,
	\widetilde{\mathbf q}_{K-1}^{\mathsf T}\big]^{\mathsf T}.  
\end{align}   
Based on $\widetilde{\mathbf Q}$,  the demand vector
of the RD-NP-UCC  scheme is given by 
\begin{align}\label{achA_P}  
	 \widetilde{\mathbf{P}}  = \left( \widetilde p_0, \widetilde p_1, \dots, \widetilde p_{K\bar{N}-1}  \right)^{\mathsf {T}},  \quad \text{where } \widetilde p_u = P_{\widetilde q_u}.  
\end{align}  
Partition \(\widetilde{\mathbf P}\) into \(K\) subvectors of length \(\bar N\) as
\(\widetilde{\mathbf P}=[\widetilde{\mathbf p}_0^{\mathsf T},\widetilde{\mathbf p}_1^{\mathsf T}, \dots,\widetilde{\mathbf p}_{K-1}^{\mathsf T}]^{\mathsf T}\). 
Since each \(\widetilde{\mathbf p}_k\) is a permutation of \(\{P_i:\, i\in\mathcal T\}\),
 it follows from Definition~\ref{def_RS} that
\(\widetilde{\mathbf P}\in\mathcal D_{\mathrm{RS}}\).   
 
The broadcast message is the same as that of the RD-NP-UCC scheme under demand $\widetilde{ \mathbf{P}} $, i.e.,    
\begin{align} \label{achA_X} 	  
	X_{\mathbf{D}} =  & 	X^{\mathrm{(np)}}_{\widetilde{\mathbf P}}
	= \left( \psi^{\mathrm{(np)}} \left(  \widetilde{ \mathbf{P}}, \mathbf{I}_N  \left( \widetilde{\mathbf{p}}_0,: \right) \mathbf{W}^{ \mathrm{(np)}}  \right),    \widetilde{ \mathbf{P}}  \right)     \nonumber \\  
	\overset{(a)}{=} &   \left(   \psi^{\mathrm{(np)}} \left(  \widetilde{ \mathbf{P}}, \mathbf{I}_N (\widetilde{\mathbf{q}}_0,: ) \mathbf{W}    \right),  \widetilde{ \mathbf{P}}   \right),  
\end{align}        
where $(a)$ follows from \eqref{achA_Permu}. The resulting rate is thus the same as that of the RD-NP-UCC scheme.  
   
\subsubsection*{Proof of Correctness}   
	For any $k \in [K]$ and $l \in [L]$, user $k$ can recover $W_{d_{k,l}}$ from $X_{\mathbf D}$ and $Z_k$ as follows.

	Since $\widetilde{\mathbf P}$ is included in $X_{\mathbf D}$, user $k$ can determine $(\widetilde p_i)_{i\in[\bar N]}$ from $X_{\mathbf D}$. 
	Then, by \eqref{achA_Z}, user $k$ can obtain $\left( Z_{k,\widetilde p_i,l} \right)_{i\in[\bar N]}$ from $Z_k$. Thus, user $k$ can recover $W_{d_{k,l}}$ through    
	\begin{align*}  
		& \gamma_{k \bar{N} + S_{k,l}}^{\mathrm{(np)}}  \left(    X_{\mathbf{D}},  	\left(  	Z_{k,\widetilde {p}_i,l}     \right)_{ i \in [\bar{N}]}    \right) \nonumber \\
		\overset{(a)}{=}  &  \gamma_{k \bar{N} + S_{k,l}}^{\mathrm{(np)}}   \left(    X^{\mathrm{(np)}}_{	\widetilde{\mathbf P}},  	\left( 	\left\{ W^{\mathrm{(np)}}_{\widetilde {p}_i,j} : j \in \mathcal{M}_{ k \bar{N} + S_{k,l},\widetilde {p}_i}^{\mathrm{(np)}}  \right\}     \right)_{ i \in [\bar{N}]}    \right) \nonumber \\
		\overset{(b)}{=}  &    W^{\mathrm{(np)}}_{\widetilde{p}_{ k \bar{N} + S_{k,l}}}  
		\overset{(c)}{=}   W_{\widetilde{q}_{ k \bar{N} + S_{k,l}}}  
		\overset{(d)}{=}	W_{d_{k,l}},  
	\end{align*}    
	where $(a)$ follows from \eqref{achA_Zkl} and \eqref{achA_X}, 
	$(b)$ follows from the correctness of the RD-NP-UCC scheme, i.e.,   \eqref{achA_np_decoder} with $\widetilde{\mathbf D} = \widetilde{\mathbf P}$,  
	$(c)$ follows from \eqref{achA_Permu} and \eqref{achA_P},   
	and
	$(d)$ follows from \eqref{achA_Qk} and \eqref{achA_Q}, which imply that
	$\widetilde{q}_{k\bar N + S_{k,l}} = d_{k,l}$. 
	The proof of correctness is thus complete.

\subsubsection*{Proof of Privacy}
Recall that \(\mathbf W^{(\mathrm{np})}\) denotes the auxiliary file library under the relabeling defined by \(\mathbf P\).   
For any \(k\in[K]\), we have   
	\begin{align*} 
	& I \left( \mathbf{D}_{ \setminus k}; Z_k ,  X_{\mathbf{D}}, \mathbf{d}_k   \right) 
	\le    I \big( \mathbf{D}_{ \setminus k}; Z_k ,  X_{\mathbf{D}},   \mathbf{d}_k ,  \mathbf{W}^{\mathrm{(np)}}   \big)   \nonumber \\   
	\overset{(a)}{=}  &    I\big( \mathbf{D}_{ \setminus k}; Z_k ,  X_{\mathbf{D}},   \mathbf{d}_k \big|   \mathbf{W}^{\mathrm{(np)}}    \big)  \nonumber \\ 
= &   I\Big( \mathbf{D}_{ \setminus k};  (Z_{k,i})_{i \in [N]},    \psi^{\mathrm{(np)}} \left(  \widetilde{ \mathbf{P}}, \mathbf{I}_N (\widetilde{\mathbf{p}}_0,: ) \mathbf{W}^{\mathrm{(np)}} \right),  \nonumber \\      
& \qquad     S_{k},  \widetilde{\mathbf{P}},  \mathbf{d}_k     \Big|   \mathbf{W}^{\mathrm{(np)}}   \Big)   	\overset{(b)}{=}     I \left( \mathbf{D}_{ \setminus k};    S_{k},  \widetilde{\mathbf{P}},  \mathbf{d}_k  \Big|   \mathbf{W}^{\mathrm{(np)}}     \right)    \nonumber \\   
	\overset{(c)}{=}  &  I \left( \mathbf{D}_{ \setminus k};    S_{k},  \widetilde{\mathbf{P}},  \mathbf{d}_k     \right)    \overset{(d)}{=}    I \left( \mathbf{D}_{ \setminus k};     \widetilde{\mathbf{P}}   \Big |  S_{k},   \mathbf{d}_k    \right),    
	\end{align*}   
where $(a)$ follows from the fact that demands, files, and the permutation \(\mathbf P\) are mutually independent, which implies that \(I \big( \mathbf{D}_{ \setminus k};  \mathbf{W}^{\mathrm{(np)}} \big)=0\), 
$(b)$ follows from the fact that each \(Z_{k,i}\) is determined by \(W_i^{\mathrm{(np)}}\) and \(S_k\), and \(\psi^{\mathrm{(np)}} \big(\widetilde{\mathbf P}, \mathbf I_N(\widetilde{\mathbf p}_0,:)\mathbf W^{\mathrm{(np)}}\big)\) is determined by \(\widetilde{\mathbf P}\) and \(\mathbf W^{\mathrm{(np)}}\),     
$(c)$ follows from the fact that the files are i.i.d. and independent of $\mathbf P$, 
and $(d)$   follows from the fact that \(S_k, \mathbf d_0, \mathbf d_1, \ldots, \mathbf d_{K-1}\) are mutually independent.

It remains to show that $I  \big(\mathbf{D}_{\setminus k}; \widetilde{\mathbf{P}} \big| S_k,\mathbf{d}_k  \big)=0$.   
This follows from the fact that, given \(S_k\) and \(\mathbf d_k\), the distribution of 
\(\widetilde{\mathbf{P}}\) does not depend on \(\mathbf D_{\setminus k}\).    
A detailed proof can be found in Appendix~\ref{sec_Th1Privacy}.  

The proof of privacy is thus complete, and so is the proof of Lemma~\ref{T1_Lemma}.

 \subsection{Completion of the Proof of Theorem~\ref{theorem_achA}}
 
 As noted in Remark~\ref{remarkYMA}, the YMA scheme for \(N\) files and \(K\bar N\) users is an \((N,K\bar N,1)\) RD-NP-UCC scheme.  
 In this scheme, for \(r\in [0:K\bar N]\), each file is partitioned into \(\binom{K\bar N}{r}\) subfiles \(W_{n,\mathcal R}\), where \(\mathcal R\subseteq[K\bar N]\) with \(|\mathcal R|=r\), and user \(u\) caches all subfiles with \(u\in\mathcal R\).
 Hence, the union of the caches of any \(L\) users contains exactly those subfiles \(W_{n,\mathcal R}\) with \(\mathcal R\) intersecting this set, whose number is \(\binom{K\bar N}{r}-\binom{K\bar N-L}{r}\). 
 Applying Lemma~\ref{T1_Lemma}, we obtain
 \begin{align*}
 	 M=\frac{\binom{K\bar N}{r}-\binom{K\bar N-L}{r}}{\binom{K\bar N}{r}} N.
 \end{align*}  
 Moreover, since \(\widetilde{\mathbf P}\in\mathcal D_{\mathrm{RS}}\), the number of distinct file indices in \(\widetilde{\mathbf P}\) is \(\bar N\), and thus   
 \begin{align*}
 R=\frac{\binom{K\bar N}{r+1}-\binom{(K-1)\bar N}{r+1}}{\binom{K\bar N}{r}}.
 \end{align*}   
By using memory sharing \cite{MaddahAli2014}, the lower convex envelope of the above memory-rate pairs is also achievable.  

The proof of Theorem~\ref{theorem_achA} is thus complete.

\section{Conclusion}   
In this paper, we made advances in demand-private coded caching with multiple demands by proposing a new achievable scheme.  
More specifically, the proposed scheme is based on a transformation from an RD-NP-UCC scheme to a scheme for demand-private coded caching with multiple demands. 
We also established a converse bound, showing that the proposed scheme is order optimal within a factor of $6$, thereby covering regimes that were previously open.     
    
\bibliographystyle{IEEEtran}
\bibliography{ref} 
 	  
\newpage    
\appendices
\section{Completion of the Proof of Privacy} \label{sec_Th1Privacy} 
In this appendix, we show that  
$
I \big(\mathbf D_{\setminus k}; \widetilde{\mathbf P} \big | S_k,\mathbf d_k \big)=0, 
$
which is the remaining step in the proof of privacy in Lemma~\ref{T1_Lemma}.

We begin by characterizing the conditional distribution of $\widetilde{\mathbf Q}$. 
For any fixed realization of \(\mathbf D\) and \(\mathcal T\), and any
$
\mathbf q=\big[\mathbf q_0^{\mathsf T},\mathbf q_1^{\mathsf T},\dots,\mathbf q_{K-1}^{\mathsf T}\big]^{\mathsf T}
$
such that \(\mathbf q_i\in \mathrm{Perm}(\mathcal T)\),  \(i\in[K]\setminus\{k\}\), and \(\mathbf q_k\in\mathcal Q_k\),  we have      
\begin{align}   \label{achA_PrivacyQ}  
	& \Pr \left( \widetilde{\mathbf{Q}} =  \mathbf{q} \Big |  \mathbf{D},\mathcal{T},S_k  \right)    
	\overset{(a)}{=}  \prod_{i\in[K]} \Pr \left( \widetilde{\mathbf q}_i = \mathbf{q}_i  |  \mathbf{D},\mathcal{T},S_k \right)       \nonumber \\    
	\overset{(b)}{=}      &  \Pr  \left( \widetilde{\mathbf{q}}_k = \mathbf{q}_k     | \mathbf{d}_k,\mathcal{T},S_k \right)   \times   \prod_{i\in[K] \setminus \{k\}} \Pr(\widetilde{\mathbf{q}}_i =   \mathbf{q}_i  | \mathbf{d}_i,\mathcal{T} )        \nonumber \\
	\overset{(c)}{=}      & \frac{1}{|\mathcal{Q}_k|}   \times \prod_{i\in[K] \setminus \{k\}}    \frac{1}{|\mathcal{Q}_i||\mathcal S^{(\bar N,L)}| }        \nonumber \\   
	\overset{(d)}{=}      & \frac{1}{(\bar{N} - L )! }   \times  \left( \frac{1}{ \bar{N} ! }     \right)^{(K-1)},  
\end{align}   
where $(a)$ and $(b)$ follow from the fact that \(\widetilde{\mathbf q}_i\), \(i\in[K]\), are generated independently across users from \((\mathcal T,\mathbf d_i,S_i)\), and that \((\mathbf d_i,S_i)\), \(i\in[K]\), are mutually independent,     
$(c)$ follows from \eqref{achA_Qk}, the uniformity of \(S_i\) over \(\mathcal S^{(\bar N,L)}\) for \(i\in[K]\setminus\{k\}\),  and the fact that, given \((\mathbf d_i,\mathcal T)\), for each \(\mathbf q_i\in\mathrm{Perm}(\mathcal T)\), there exists a unique \(S_i\in\mathcal S^{(\bar N,L)}\) such that \(\mathbf q_i\in\mathcal Q_i\),  
and $(d)$ follows from the definitions of \(\mathcal Q_i\), $i \in [K]$,  and \(\mathcal S^{(\bar N,L)}\).

Next, for $k \in [K]$, given $\mathcal Q_k$ and $\mathbf P$,  we define  
\begin{align*}  
	\mathcal{P}_k 
	=
	\left\{ 
	( 	P_{v_0},P_{v_1},\dots,P_{v_{\bar N-1}}   )^{\mathsf T}:
	(  
	v_0,v_1,\dots,v_{\bar N-1}
	)^{\mathsf T}  
	\in \mathcal Q_k
	\right\},  
\end{align*}  
where $ \mathcal{P}_k $ is determined by $  (\mathbf{P}, \mathbf{D},\mathcal{T},S_k)  $. 
With this definition, for any realization
$ \mathbf p=
\big[
\mathbf p_0^{\mathsf T},
\mathbf p_1^{\mathsf T},
\dots, 
\mathbf p_{K-1}^{\mathsf T}
\big]^{\mathsf T} $ 
of $\widetilde{\mathbf P}$ such that $\mathbf p\in\mathcal D_{\mathrm{RS}}$
and $\mathbf p_k\in\mathcal P_k$,   
since \(\mathbf P\) is a permutation of \([N]\), the event \(\{\widetilde{\mathbf P}=\mathbf p\}\) uniquely determines a realization of \(\widetilde{\mathbf Q}\). This realization satisfies \(\widetilde{\mathbf q}_k\in\mathcal Q_k\) and \(\widetilde{\mathbf q}_i\in\mathrm{Perm}(\mathcal T)\) for all \(i\in[K]\setminus\{k\}\).  
Otherwise, the event $\{\widetilde{\mathbf P}=\mathbf p\}$ is impossible. 
Thus, for any fixed $\mathbf P$, $\mathbf D$, $\mathcal T$, and $S_k$, and any
$\mathbf p\in\mathcal D_{\mathrm{RS}}$,   \eqref{achA_PrivacyQ} implies that    
\begin{align}  \label{achA_PrivacyP}
	\Pr \left( \widetilde{\mathbf{P}} =   \mathbf{p}    \Big | \mathbf{P}, \mathbf{D},\mathcal{T},S_k \right)   =  \frac{	\mathbf{1}\{ \mathbf{p}_k \in \mathcal{P}_k  \} }{(\bar{N} - L )! }   \times  \left( \frac{1}{ \bar{N} ! }     \right)^{(K-1)}.  
\end{align}         
Finally, for any fixed $\mathbf D$ and $S_k$, and any
$\mathbf p\in\mathcal D_{\mathrm{RS}}$, we have
\begin{align} \label{achA_Privacylast}  
	& \Pr \left( \widetilde{\mathbf{P}} = \mathbf{p} \Big| \mathbf{D},S_k \right) \nonumber \\  
	\overset{(a)}{=}    & \sum_{\mathcal{T} \in \mathcal F_{\bar N}(\mathbf D)}    
	\sum_{\mathbf{P} \in \mathrm{Perm}([N]) }\frac{ 	\Pr \left( \widetilde{\mathbf{P}} = \mathbf{p} \Big|  \mathbf{P}, \mathbf{D},\mathcal{T},S_k \right)  }{ N ! |\mathcal F_{\bar N}(\mathbf D)|  }      \nonumber \\   
	\overset{(b)}{=}     &	\sum_{\mathcal{T} \in \mathcal F_{\bar N}(\mathbf D)}   	\sum_{\mathbf{P}  \in \mathrm{Perm}([N]) }\frac{  \mathbf{1}\{ \mathbf{p}_k \in \mathcal{P}_k  \}  }{ N ! (\bar{N} - L )! (\bar{N}! )^{K-1}  |\mathcal F_{\bar N}(\mathbf D)|  }    \nonumber \\
	\overset{(c)}{=}     &	\sum_{\mathcal{T} \in \mathcal F_{\bar N}(\mathbf D)}   	   \frac{ |\mathcal Q_k|(N-\bar N)!  }{ N ! (\bar{N} - L )! (\bar{N}! )^{K-1}  |\mathcal F_{\bar N}(\mathbf D)|  } \nonumber \\
	\overset{(d)}{=}     &	 	   \frac{  (N-\bar N)!  }{ N !  (\bar{N}! )^{K-1}     } ,        
\end{align}   
where $(a)$ follows from the conditional independence and uniformity of $\mathbf{P}$ and $\mathcal{T}$ given $(\mathbf{D}, S_k)$,   
$(b)$  follows from \eqref{achA_PrivacyP},   
$(c)$ follows from the fact that, for any fixed $(\mathbf D,\mathcal T,S_k)$, the set $\mathcal Q_k$ is determined, and for any fixed $\mathbf p_k$, each  $ (v_0,v_1, \dots, v_{\bar{N}-1} )^{\mathsf T}  \in\mathcal Q_k$   determines a disjoint set of $(N-\bar N)!$ permutations $\mathbf  P\in\mathrm{Perm}([N])$ such that $(P_{v_0},\dots,P_{v_{\bar N-1}})^{\mathsf T}=\mathbf p_k$, 
and $(d)$ follows from $|\mathcal Q_k|=(\bar N-L)!$ and the summation over $\mathcal T$. 

It follows from \eqref{achA_Privacylast} that, for any $\mathbf p\in\mathcal D_{\mathrm{RS}}$, 
$\Pr \big( \widetilde{\mathbf{P}} = \mathbf{p} \big| \mathbf{D},S_k \big)$ is independent of $\mathbf D$.
Thus,  we have 
$
I \big( \mathbf D_{\setminus k};\widetilde{\mathbf P} \big| S_k,\mathbf d_k \big)=0 
$, 
which completes the proof.

\section{Proof of Theorem \ref{th_converse}}  \label{sec_appconverse} 
We follow the proof of \cite[Theorem 2]{Qian2019} and   adapt it to the case where each user requests \(L\) files.  

Since  $	\bar N  =  \min\{N,KL\}$,  we have    
$  \lfloor \bar{N}/L \rfloor   =  \min\{ \lfloor N/L \rfloor, K  \}     $.       
For  \(i \in [ 1: \lfloor \bar{N}/L  \rfloor-1]\), and any demand matrix $\mathbf{D}$, we have
\begin{align*} 
	&H \left( X_{\mathbf D} \big| Z_{[i]},W_{\{d_{j,l}:j\in[i-1], l\in[L]\}} \right) 
	\nonumber\\  
	\overset{(a)}{=}    	
	&  H \left( 	W_{\{d_{i-1,l}:l \in[L]\}},  X_{\mathbf D} \big| Z_{[i]},\,W_{\{d_{j,l}:j\in[i-1], l\in[L]\}} \right) 
	\nonumber\\   
	= 	& 
	H \left(
	W_{\{d_{i-1,l}: l \in[L]\}} \big|   
	Z_{[i]},W_{\{d_{j,l}:j\in[i-1],l\in[L]\}} 
	\right)
	\nonumber\\
	&\quad+
	H\!\left(
	X_{\mathbf D} 
	\middle|
	Z_{[i]}, W_{\{d_{j,l}:j\in[i],l \in[L]\}}
	\right) 	\nonumber\\    
	\ge	& 
	H \left(
	W_{\{d_{i-1,l}: l \in[L]\}} \big|   
	Z_{[i]},W_{\{d_{j,l}:j\in[i-1],l\in[L]\}} 
	\right)
	\nonumber\\
	&\quad+
	H \left(
	X_{\mathbf D}
	\middle|
	Z_{[i+1]}, W_{\{d_{j,l}:j\in[i],l \in[L]\}}
	\right) ,  
\end{align*}
where $(a)$ follows from the correctness constraint \eqref{correctness}.   
Summing the above inequalities over   \(i \in [ 1: \lfloor \bar{N}/L  \rfloor-1]  \), we obtain    
\begin{align} 	\label{pfT2_lemma2}    
	&	RF \ge H(X_{\mathbf D}) \ge  H \left( X_{\mathbf D} \big| Z_0   \right)  \nonumber \\  
	\ge 	& \sum_{i = 1}^{\lfloor \bar{N}/L -1 \rfloor  }   	H \left(
	W_{\{d_{i-1,l}: l \in[L]\}} \big|   
	Z_{[i]},W_{\{d_{j,l}:j\in[i-1],l\in[L]\}} 
	\right) \nonumber \\
	& +
	H \left(
	X_{\mathbf D}
	\middle|
	Z_{[\lfloor \bar{N}/L \rfloor]}, W_{\{d_{j,l}:j\in[\lfloor \bar{N}/L \rfloor-1],l \in[L]\}}
	\right)  \nonumber \\ 
	\overset{(a)}{\ge}    		&	\sum_{i = 1}^{\lfloor \bar{N}/L\rfloor  }   	H \left(
	W_{\{d_{i-1,l}: l \in[L]\}} \big|   
	Z_{[i]},W_{\{d_{j,l}:j\in[i-1],l\in[L]\}} 
	\right),  
\end{align}
where $(a)$ follows from the correctness constraint \eqref{correctness}.     
The bound in \eqref{pfT2_lemma2} extends \cite[Lemma~2]{Qian2019} to the case where each user requests \(L\) files and the exact decoding constraint is imposed. 
  
Next, we use the homogeneity of the problem to derive a symmetrized version of the bound in \eqref{pfT2_lemma2}. 
For any permutations $\bar{\pi}\in \mathrm{Perm}([K])$
and $\hat{\pi}\in \mathrm{Perm}([N])$, and  any index sets $\mathcal K\subseteq [K]$ and $\mathcal N\subseteq [N]$,
we define  
\begin{align*}
	& \bar{\pi}(\mathcal K) =\{\bar{\pi}(k):k\in\mathcal K\},
	\quad
	\hat{\pi}(\mathcal N) =\{\hat{\pi}(n):n\in\mathcal N\}, \text{ and} \nonumber \\
	&	H^*(W_{\mathcal N},Z_{\mathcal K})
	=
	\frac{1}{N!\,K!}
	\sum_{\substack{\hat{\pi}\in \mathrm{Perm}([N])\\ \bar{\pi}\in \mathrm{Perm}([K])}} 
	H\!\left(W_{\hat{\pi}(\mathcal N)},Z_{\bar{\pi}(\mathcal K)}\right).  
\end{align*}  
Similarly, conditional entropy is defined using the same notation.
As shown in \cite{Qian2019}, 
\(H^*(\cdot)\) satisfies all Shannon's inequalities.

For notational convenience, let $\Phi =  ( \phi_0,\phi_1, \ldots,\phi_{K-1})$ denote the collection of caching functions.  
Following the same argument as in \cite[(29)--(32)]{Qian2019}, we consider
the class of demand matrices in which at least
\(\bar N=\min\{N,KL\}\) files are requested, and average the bound in
\eqref{pfT2_lemma2} over all file-index and user-index permutations.
It follows that  
\begin{align}	\label{eq:R-RA} 
	&	R \ge \inf_{F\in\mathbb N_+}\min_{\Phi} R(F,\Phi), \quad  \text{where}   \\ 
	&	R(F,\Phi) = 
	\frac{1}{F}\sum_{i=1}^{\lfloor \bar N/L \rfloor}
	H^*\!\left(
	W_{[(i-1)L:iL-1]}
	\middle| 
	Z_{[i]},\,W_{[(i-1)L]} \nonumber 
	\right).   
\end{align}    

It remains to derive a lower bound on \(R(F,\Phi)\), given in the following lemma.   
\begin{lemma}[Extension of {\cite[Lemma 3]{Qian2019}} ] 
	\label{pfT2_lemma3}       
	For any $s \in [1:\lfloor \bar{N}/L \rfloor]$,  $\lambda \in [0,1]$, and any prefetching scheme \(\Phi\), we have 
	\begin{align*} 
		&	R(F,\Phi )   \ge  (s-1+\lambda ) L - \frac{L\big(2 \lambda s+s(s-1) - t(t-1)\big)}{2 (N-L(t-1))} M,  
	\end{align*} 
	where $t \in [1:s] $ is the minimum value such that   
	\begin{align*}  
		L 	 \left(  s(s-1)-t(t-1)  +   2 \lambda s \right) \le  2 ( N-(t-1)L)t.   
	\end{align*}    
\end{lemma}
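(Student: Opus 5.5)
We follow the proof of \cite[Lemma~3]{Qian2019}, replacing single files by blocks of $L$ files. Write
\begin{align*}
	a_i=\frac{1}{F}H^*\!\left(W_{[(i-1)L:iL-1]}\middle| Z_{[i]},W_{[(i-1)L]}\right),
	\qquad i\in[1:\lfloor \bar N/L\rfloor],
\end{align*}
so that $R(F,\Phi)=\sum_{i}a_i$. We also need normalized symmetrized quantities of the form $\frac{1}{F}H^*(W_{[jL]}\,|\,Z_{[i]},W_{[(i-1)L]})$, i.e., the entropy of the remaining unknown files given $i$ caches. The plan has three steps. First, derive two families of inequalities. Second, combine them with weights determined by $(s,\lambda,t)$. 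Third, drop the remaining nonnegative terms.

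\emph{Step 1 (per-block inequalities).} We use two facts. First, $a_i\le L$ holds trivially, and the chain rule with $a_i$ gives lower bounds. Second, the key symmetrization inequality: by the symmetry of $H^*$ over file permutations, conditioning on $Z_{[i]}$ and $W_{[(i-1)L]}$ leaves $N-(i-1)L$ exchangeable unknown files. Han-type averaging (all $L$-subsets of these unknown files are equivalent under $H^*$) then gives
\begin{align*}
	a_i \ge \frac{L}{N-(i-1)L}\cdot\frac{1}{F}H^*\!\left(W_{[(i-1)L:N-1]}\middle| Z_{[i]},W_{[(i-1)L]}\right).
\end{align*}
The right-hand side is lower bounded by $\frac{L}{N-(i-1)L}\big(N-(i-1)L-iM\big)$, using $H(Z_{[i]})\le iMF$ and the fact that the files are independent. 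We also need the analogue of \cite{Qian2019}'s inequality linking consecutive $i$. Adding cache $Z_i$ costs at most $M$, which gives a telescoping relation of the form
\begin{align*}
	a_{i+1}\ge a_i - \frac{L}{N-(i-1)L}\,M\cdot(\text{coefficient}),
\end{align*}
obtained by submodularity of $H^*$ applied to $(Z_i, W_{[(i-1)L:iL-1]})$ under user symmetry.

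\emph{Step 2 (combination).} Sum $a_1,\dots,a_{s-1}$ and $\lambda a_s$. For $i\ge t$, use the cache-size-based bound from Step 1 in its ``$M$-linear'' form: the term $iM$, weighted by $\frac{L}{N-(t-1)L}$, accumulates into $\sum_{i=t}^{s-1} i+\lambda s=\frac{s(s-1)-t(t-1)}{2}+\lambda s$. For $i<t$, use the trivial identity $a_i\ge L-(\text{cache term})$, or more precisely keep them at $L$ via the entropy-counting argument. The total then reads
\begin{align*}
	(s-1+\lambda)L-\frac{L\big(2\lambda s+s(s-1)-t(t-1)\big)}{2\big(N-L(t-1)\big)}\,M.
\end{align*}
The denominators $N-(i-1)L$ for $i\ge t$ are bounded below uniformly by $N-(s-1)L$. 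We need $N-(t-1)L$ instead, so the actual mechanism must be the one in \cite{Qian2019}: the files of blocks $t,\dots,s$ are grouped, and the bound is applied once to the entropy of these $(s-t+\lambda)L$-size demanded blocks given $Z_{[t]}$, with $N-(t-1)L$ unknown files. We will adopt that grouping, i.e., bound $\sum_{i\ge t}a_i$ jointly via a single symmetrization step at level $t$.

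\emph{Step 3 (choice of $t$ and verification).} The condition
\begin{align*}
	L\big(s(s-1)-t(t-1)+2\lambda s\big)\le 2\big(N-(t-1)L\big)t
\end{align*}
says the accumulated cache coefficient does not exceed the number of unknown files times $t$. This is exactly what is needed so that the terms with $i<t$, each bounded by $L$ minus a share of $M$, combine consistently: minimality of $t$ ensures that the corresponding $t-1$ fails, which guarantees the weights assigned to $a_1,\dots,a_{t-1}$ are nonnegative. The main obstacle will be Step 2: correctly carrying the $L$-block structure through \cite[Lemma~3]{Qian2019}'s induction, especially the symmetrization over $L$-subsets that gives the factor $L/(N-(t-1)L)$. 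I would first reprove the $L=1$ case verbatim, and then replace every single-file symmetrization by averaging over $L$-subsets of the remaining files.
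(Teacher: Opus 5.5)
Your proposal has a genuine gap in Step~2, and it originates in Step~1. The file-symmetrization inequality (averaging over $L$-subsets of the $N-(i-1)L$ unknown files) is correct and is one of the two ingredients. You then immediately replace $h_i:=H^*\big(Z_{[i]}\,\big|\,W_{[(i-1)L]}\big)$ by $iMF$, and this is what produces the wrong denominators $N-(i-1)L$ that you noticed. The missing second ingredient is not a relation $a_{i+1}\ge a_i-(\cdot)M$, which you do not derive and which the argument does not use. It is the chain-rule identity followed by Han's inequality over \emph{users}:
\begin{align*}
H^*\big(W_{[(i-1)L:iL-1]}\,\big|\,Z_{[i]},W_{[(i-1)L]}\big)
&=LF-h_i+H^*\big(Z_{[i]}\,\big|\,W_{[iL]}\big)\ge LF-h_i+\tfrac{i}{i+1}h_{i+1}.
\end{align*}
The actual proof keeps the $h_i$ as free variables. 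For $x\in[t:s-1]$, take the convex combination of the two bounds with weight $\alpha_x=\frac{L(2\lambda s+s(s-1)-x(x+1))}{2x(N-xL)}$ on the second. This gives $LF-\beta_xh_x+\beta_{x+1}h_{x+1}$ with $\beta_x=\frac{L(2\lambda s+s(s-1)-x(x-1))}{2x(N-(x-1)L)}$. Add $\lambda$ times the first bound at $x=s$, whose $h_s$-coefficient is exactly $\beta_s$; the sum telescopes to $(s-t+\lambda)LF-\beta_th_t$. The denominator $N-(t-1)L$ comes from this telescoping, not from a single symmetrization at level $t$. Your grouping cannot work as described: for $i>t$ your $a_i$ conditions on $Z_{[i]}$, i.e., on strictly more caches than $Z_{[t]}$. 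Any entropy conditioned only on $Z_{[t]}$ therefore upper-bounds these terms rather than lower-bounding them.

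The terms $i<t$ also cannot be ``kept at $L$'': for a scheme caching the whole library, every $a_i$ is $0$. Applying the second bound to each of them and telescoping gives
$(t-1)LF-\sum_{i<t}\frac{1}{i}h_i+\frac{t-1}{t}h_t\ge(t-1)LF-(t-1)MF+\frac{t-1}{t}h_t$.
The positive term $\frac{t-1}{t}h_t$ is essential, because it offsets $-\beta_th_t$. This is where $t$ enters, and your Step~3 misdescribes its role. The condition at $t$ propagates to every $x\in[t:s-1]$ and gives $\alpha_x\in[0,1]$, so the mixing above is legitimate. Minimality of $t$ gives $\alpha_{t-1}>1$ for $t\ge 2$, i.e., $\beta_t=\frac{t-1}{t}\alpha_{t-1}\ge\frac{t-1}{t}$. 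Hence the residual coefficient $\frac{t-1}{t}-\beta_t$ of $h_t$ is nonpositive, so one may use $h_t\le tMF$. The total cache penalty is then $(t-1)M+\big(\beta_t-\frac{t-1}{t}\big)tM=t\beta_tM$, which is exactly the coefficient in the lemma. Without the user-symmetric inequality and this coupling through $h_t$, your plan has no mechanism to reach the stated bound.
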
    
\begin{IEEEproof} 
	We prove the lemma by extending the proof of \cite[Lemma 3]{Qian2019} to the case where each user requests multiple files, following a similar step-by-step argument with suitable adjustments to the coefficients. 
	  
	Following from the definition of $R(F,\Phi)$ and the non-negativity of entropy,  we have
	\begin{align}\label{pfT2l3_76}
		R(F,\Phi)F  & 
		\ge 
		\sum_{i = 1}^{s-1} 
		H^* \left(  W_{[(i-1)L:iL-1]} \big| Z_{[i]},   W_{[(i-1)L]}\right) 
		\nonumber \\	&  +
		\lambda  H^* \left(  W_{[(s-1)L:sL-1]} \big| Z_{[s]},   W_{[(s-1)L]}\right). 
	\end{align}
	We then bound each term above in the following $2$ ways:      
	\begin{align}\label{pfT2l3type1}       
		& H^* \left(  W_{[(i-1)L:iL-1]} \big| Z_{[i]},   W_{[(i-1)L]}\right) \nonumber \\
		\ge  &   \frac{L H^* \left(  W_{[(i-1)L:N-1]} \big| Z_{[i]},   W_{[(i-1)L]}\right)  }{N-(i-1)L} \nonumber \\
		\ge  &  LF - \frac{L H^* \left(  Z_{[i]}\big| W_{[(i-1)L]}\right)}{N-(i-1)L},   \\
		\label{pfT2l3type2}        
		& H^* \left(  W_{[(i-1)L:iL-1]} \big| Z_{[i]},   W_{[(i-1)L]}\right) \nonumber \\
		=  &   LF- H^* \left(  Z_{[i]}| W_{[(i-1)L]}\right) +  H^* \left(  Z_{[i]}| W_{[iL]}\right)    \nonumber \\ 
		\ge 	&  LF - H^* \left(  Z_{[i]}\big| W_{[(i-1)L]}\right)   + \frac{i}{i+1}H^* \left( Z_{[i+1]} \big| W_{[iL]} \right),    
	\end{align}    
	where $i \in [1:s]  $ in  \eqref{pfT2l3type1}  and $i \in [1:s-1]  $ in \eqref{pfT2l3type2}.
	
	Next, for $x\in  [1:s-1]$, we define  
	\begin{align*}
		a_x & = \frac{ L (2\lambda s+s(s-1) -x(x+1) )}{2x (N-xL)},   \\
		b_x & = \frac{ L (2\lambda s+s(s-1) -x(x-1) )}{2x (N-L(x-1))},       
	\end{align*}   
	which satisfy 
	\begin{align}
		\frac{(1-a_x)L}{N-(x-1)L}+a_x=b_x,
		\qquad
		\frac{x}{x+1}a_x=b_{x+1}.
		\label{pfT2l3_8182}
	\end{align}    
	Since $t$ is the minimum value such that    \eqref{converse_condition} holds,  we have $a_x\in[0,1]$ for   $x\in [t:s-1]$ and  $b_t\ge \frac{t-1}{t}$.  
	Armed with the above properties,  we now bound $R(F,\Phi)$ as follows.

	For each $x\in [t:s-1]$, by taking 
	$(1-a_x)\times\eqref{pfT2l3type1}+a_x\times\eqref{pfT2l3type2}$ and applying  \eqref{pfT2l3_8182}, we have            
	\begin{align}\label{pfT2l3_83}
		& 	H^* \left(  W_{[(x-1)L:xL-1]} \big| Z_{[x]},   W_{[(x-1)L]}\right)   \nonumber \\ 
		\ge &    LF - b_x H^* \left(  Z_{[x]}\big| W_{[(x-1)L]}\right)  +  b_{x+1}  H^* \left( Z_{[x+1]}\big| W_{[xL]} \right).     
	\end{align}     
	For $x = s$, following from \eqref{pfT2l3type1},  we have   
	\begin{align}\label{pfT2l3_84} 
		&\lambda H^* \left( 
		W_{[(s-1)L:sL-1]} \big|  Z_{[s]}, W_{[(s-1)L]}
		\right)
		\nonumber\\
		\ge	&
		\lambda LF-b_sH^* \left(Z_{[s]}  \big|  W_{[(s-1)L]}\right). 
	\end{align}  
	Summing \eqref{pfT2l3_83} over $x\in [t:s-1]$ and adding \eqref{pfT2l3_84}, we have  
	\begin{align}\label{pfT2l3_85}    
		&   \sum_{i = t}^{ s -1 }  H^*\left( W_{[(i-1)L: iL-1]} \big| Z_{[i]},  W_{[(i-1)L]}   \right) \nonumber \\
		& \quad +    \lambda  H^* \left(  W_{[(s-1)L:sL-1]} \big| Z_{[s]},   W_{[(s-1)L]}\right)  \nonumber \\
		& \ge (s-t+ \lambda )LF - b_t H^* \left(  Z_{[t]}\big| W_{[(t-1)L]}\right).  
	\end{align} 
	On the other hand,  following from \eqref{pfT2l3type2}, we have  
	\begin{align}\label{pfT2l3_86}    
		&   \sum_{i = 1}^{ t -1 }  H^*\left( W_{[(i-1)L: iL-1]} \big| Z_{[i]},  W_{[(i-1)L]}   \right)  \nonumber \\ 
		\ge &    \sum_{i = 1}^{ t -1 } \big(  LF - \frac{1}{i} H^* \left(  Z_{[i]}\big| W_{[(i-1)L]}\right) \big)  \nonumber \\
		& + \frac{t-1}{t}H^* \left( Z_{[t]} \big| W_{[(t-1)L]} \right)  \nonumber \\ 
		\ge   & (t-1)LF - (t-1) MF + \frac{t-1}{t}H^* \left( Z_{[t]} \big| W_{[(t-1)L]} \right).  
	\end{align}         
	
	Using \eqref{pfT2l3_76} and adding \eqref{pfT2l3_85} and \eqref{pfT2l3_86}, we have  
	\begin{align*} 
		&	R(F,  \Phi )F    
		\ge   (s -1 + \lambda )LF  - (t-1) MF  \nonumber \\
		& + \left( \frac{t-1}{t} -   b_t  \right)H^* \left( Z_{[t]} \big| W_{[(t-1)L]} \right)  \nonumber \\ 
		\overset{(a)}{\ge}   	&  (s -1 + \lambda )LF  - (t-1) MF    -  \left( b_t - \frac{t-1}{t}   \right) t M F \nonumber \\
		= & (s-1+\lambda ) LF     - \frac{L\big(2 \lambda s+s(s-1) - t(t-1)\big)}{2 (N-L(t-1))} MF.    
	\end{align*}  
	where $(a)$ follows from the fact that $b_t \ge  \frac{t-1}{t} $. Dividing both sides by $F$ completes the proof.    
\end{IEEEproof}

Combining Lemma~\ref{pfT2_lemma3} with \eqref{eq:R-RA}, we obtain the desired result in Theorem~\ref{th_converse}. The proof of Theorem~\ref{th_converse} is thus complete.

\section{Proof of Theorem \ref{orderoptimal}} \label{sec_apporderoptimal}   
We first derive a lower convex envelope of the converse bound in Theorem~\ref{th_converse} by following arguments similar to those in \cite{Qian2019}. We then compare the achievable memory-rate pair with the resulting lower bound and complete the proof of Theorem~\ref{orderoptimal}.    

Recall that $  \lfloor \bar{N}/L \rfloor   =  \min\{ \lfloor N/L \rfloor, K  \}     $. 
For \(s\in[1:\lfloor \bar{N}/L \rfloor]\)  and \(t\in[1:s]\), let 
\begin{align*}
	(M_{s,t},R_{s,t})
	=
	\left(
	\frac{N-L(t-1)}{s},
	\,L\Big(\frac{s-1}{2}+\frac{t(t-1)}{2s}\Big)
	\right). 
\end{align*} 
We first show that the memory-rate pair  $\bigl(M,R^{*\mathrm p}_{N,K,L}(M)\bigr) $  is lower bounded by the lower convex envelope of $ \mathcal{S}_{\mathrm{Lower}} 
\cup
\left\{\bigl(0, L  \lfloor \bar{N}/L \rfloor \bigr)\right\} $, where $\mathcal{S}_{\mathrm{Lower}}  = \bigl\{(M_{s,t},R_{s,t}):  	s\in [1:\lfloor \bar{N}/L \rfloor],  t \in [1:s] \bigr\}$.

To prove the above statement, following arguments similar to those in the proof of inequality~(7) in \cite{Qian2019}, it suffices to show that any linear function, denoted by \(R=L(A+BM)\), that lower bounds all points in 
$\mathcal{S}_{\mathrm{Lower}}\cup\left\{\bigl(0, L\lfloor \bar N/L \rfloor\bigr)\right\}$
also lower bounds \(\bigl(M,R^{*\mathrm p}_{N,K,L}(M)\bigr)\).

If \(A >  0\), since \( (0, L\lfloor \bar N/L\rfloor )\) must be lower bounded by the line, we have  \(A\le  \lfloor \bar{N}/L \rfloor  \).   
Choose  \(s=\lceil A\rceil\), \(\lambda=A-s+1\), and let \(t\) be the minimum value in \([1:s]\) such that \eqref{converse_condition} holds. Such a \(t\) always exists, since \eqref{converse_condition} holds for \(t=s\).  
Since \((M_{s,t},R_{s,t})\in \mathcal{S}_{\mathrm{Lower}}\), we have
\begin{align*}
	L \left(  A+B\frac{N-L(t-1)}{s} \right)
	\le
	L \left(\frac{s-1}{2}+\frac{t(t-1)}{2s}   \right).
\end{align*}  
Then, \(B\) can be upper bounded as 
\begin{align*}
	B \le  & 
	\frac{s(s-1)+t(t-1)-2As}{2\bigl(N-L(t-1)\bigr)} \nonumber \\
	\overset{(a)}{=}  & -\frac{s(s-1)-t(t-1)+2\lambda s}{2(N-L(t-1))}, 
\end{align*}  
where $(a)$ follows from \(A=s-1+\lambda\). 
Thus,  for \(M\ge 0\), we have  
\begin{align*} 
	& 	L(A+BM)   \nonumber \\
	\le &
	(s-1+\lambda)L
	-\frac{L\bigl(2\lambda s+s(s-1)-t(t-1)\bigr)}{2(N-L(t-1))}M  \nonumber \\
	\overset{(a)}{	\le }  & R^{*\mathrm p}_{N,K,L}(M),  
\end{align*} 
where  $(a)$ follows from  Theorem~\ref{th_converse}.

If \(A\le 0\), then \((M_{1,1},R_{1,1})=(N,0)\in\mathcal S_{\mathrm{Lower}}\) implies \(A+BN\le 0\). 
Thus, for  \(M\in[0,N]\),  we have   
\begin{align*}
	A+BM 
	=
	A\frac{N-M}{N}+(A+BN)\frac{M}{N}
	\le 0, 
\end{align*}   
which implies \(L(A+BM)\le 0\le R^{*\mathrm p}_{N,K,L}(M)\).  

From the above two cases, the proof that
$
\bigl(M,R^{*\mathrm p}_{N,K,L}(M)\bigr)
$
is lower bounded by the lower convex envelope of
$ \mathcal{S}_{\mathrm{Lower}}
\cup
\left\{\bigl(0, L  \lfloor \bar{N}/L \rfloor \bigr)\right\} $ 
is complete.

By memory sharing, the achievable memory-rate pair is convex in \(M\).  
Thus, it suffices to compare it with the above lower convex envelope at its corner points, which form a subset of  $\mathcal{S}_{\mathrm{Lower}}
\cup 
\left\{\bigl(0,\,L\lfloor \bar N/L\rfloor\bigr)\right\} $.
 
We first consider the point $\bigl(0,\,L\lfloor \bar N/L\rfloor\bigr)$.  When \(N \ge KL\), we have  \( R^{\mathrm{p}}_{N,K,L}(0)= KL
=   L\lfloor \bar N/L\rfloor  \).   
When \(N < KL\), we have $ R^{\mathrm{p}}_{N,K,L}(0) = N  \le 2 L\lfloor \bar N /L\rfloor $.    
Thus, it remains to consider the points in \(\mathcal S_{\mathrm{Lower}}\).  
Next, we consider the point $(M_{1,1},R_{1,1})=(N,0) $,  we have $
R^{\mathrm p}_{N,K,L}(M_{1,1})=0= R_{1,1}
$.   
Thus, it remains to show that
\begin{align*}
	& R^{\mathrm p}_{N,K,L}(M_{s,t}) \le 6R_{s,t}, \nonumber \\
	& \qquad  \forall  s\in[1:  \lfloor \bar N/L\rfloor  ],  t\in[1:s],  (s,t)\neq(1,1).
\end{align*}   
We now consider the following two cases.

\subsubsection{For the case $N \le 3sL$}   
From Theorem \ref{theorem_achA}, we note that $(0, \bar{N})$ and  $(N, 0)$ are achievable. Thus, by memory sharing, we have 
\begin{align*} 
	R^{\mathrm p}_{N,K,L}(M_{s,t}) \le 
	\frac{\bar N}{N}(N-M_{s,t}).
\end{align*} 
Dividing both sides  by \(R_{s,t}\), we have  
\begin{align*}
	&	\frac{R^{\mathrm p}_{N,K,L}(M_{s,t})}{R_{s,t}}
	\le 
	\frac{2\bar N\left((s-1)N+L(t-1)\right)}
	{NL\left(s(s-1)+t(t-1)\right)}  \\ 
	\overset{(a)}{\le}   & \frac{2 \left((s-1)N+L(t-1)\right)}
	{ L\left(s(s-1)+t(t-1)\right)} \nonumber \\[0.6ex]
	\overset{(b)}{\le}    & 	\frac{6s(s-1)+2(t-1)}{s(s-1)+t(t-1)}  \overset{(c)}{\le} 6, 
\end{align*}  
where $(a)$ follows from $ \bar N \le N$, 
$(b)$ follows from  $N \le 3sL$,  
and $(c)$ follows from  $  t\ge 1$.

\subsubsection{For the case \(N \ge 3sL\)}
Let $ r  =  \lfloor     \frac{ K\bar{N}(N - L(t-1))}{NLs}    \rfloor  $. Then, we have    
\begin{align*}
	\frac{\binom{K\bar N}{r}-\binom{K\bar N-L}{r}}{\binom{K\bar N}{r}}\,N
	&\le
	\frac{NLr}{K\bar N}\le 
	\frac{N-L(t-1)}{s}
	=   
	M_{s,t}, 
\end{align*}
which shows that the corresponding memory size in Theorem~\ref{theorem_achA} is no larger than \(M_{s,t}\).  
It follows from Theorem~\ref{theorem_achA} that 
\begin{align}
	R^{\mathrm p}_{N,K,L}(M_{s,t}) 	  \le 	&   \frac{\binom{K \bar{N}}{r+1} -  \binom{(K-1) \bar{N}}{r+1}}{\binom{K \bar{N}}{r}  }  
	\le \frac{K\bar N-r}{r+1} \nonumber \\
	\le &   
	\frac{K\bar N}{r+1} \le 
	\frac{NLs}{N-L(t-1)}, 
	\label{order_step2_1} 
\end{align}  
where the last step follows from  $ r + 1  =  \lfloor     \frac{ K\bar{N}(N - L(t-1))}{NLs}    \rfloor  + 1   \ge
\frac{K\bar N (N-L(t-1) )}{NLs}$.   

For the case \((s,t)\neq(1,1)\), dividing \eqref{order_step2_1} by \(R_{s,t}\) yields  
\begin{align*} 
	\frac{ 	R^{\mathrm{p}}_{N,K,L}\left(M_{s,t}  \right) } { R_{s,t}}  
	\le &  \frac{  2N s^2 }  {	  (N - L(t-1))(s(s-1) + t(t-1))  }   \nonumber \\  
	\overset{(a)}{\le} &  \frac{  6s^3 }  {	  (3s -  (t-1))(s(s-1) + t(t-1))  }   \nonumber \\   
	\overset{(b)}{\le} &  \frac{  6s^2 }  {	(2s +1)  (s-1)   }   \overset{(c)}{\le} 4.8,  
\end{align*}     
where $(a)$ follows from \(N\ge 3sL\) and the fact that \(\frac{N}{N-L(t-1)}\) is decreasing in \(N\), $(b)$ follows from \(1\le t\le s\), which implies \(3s-(t-1)\ge 2s+1\) and \(t(t-1)\ge 0\), and $(c)$ follows from \(s\ge 2\).  

Combining the above two cases with the discussion for \((M_{1,1},R_{1,1})\) and \(\bigl(0,\,L\lfloor \bar N/L\rfloor\bigr)\), we complete the proof of Theorem~\ref{orderoptimal}.    
	\end{document}